\newtheorem{Thm}{Theorem}[section]
\theoremstyle{definition}
\newtheorem{Theorem}[Thm]{Theorem}
\newtheorem{Lemma}[Thm]{Lemma}
\newtheorem{Corollary}[Thm]{Corollary}
\newtheorem{Proposition}[Thm]{Proposition}
\newtheorem{Example}[Thm]{Example}
\theoremstyle{remark}
\newtheorem{Remark}{Remark}
\font\sy=cmsy10
\font\ym=msbm10
\newcommand{\R}{\text{\ym R}}
\newcommand{\C}{\text{\ym C}}
\newcommand{\cB}{{\hbox{\sy B}}}
\newcommand{\cH}{{\hbox{\sy H}}}
\newcommand{\cK}{{\hbox{\sy K}}}
\title[Kakutani Dichotomy]
{Kakutani Dichotomy on Free States}
\author[Matsui]{Taku Matsui}
\address{Faculty of Mathematics, Kyushu University}
\email{matsui@math.kyushu-u.ac.jp}
\author[Yamagami]{Shigeru Yamagami*}
\address{Graduate School of Mathematics, Nagoya University}
\email{yamagami@math.nagoya-u.ac.jp}
\urladdr{http://www.math.nagoya-u.ac.jp/\~{}yamagami/}
\thanks{*Partially supported by KAKENHI(22540217)}
\begin{document}
\maketitle   
\begin{abstract}
Two quasi-free states on a CAR or CCR algebra are shown to generate quasi-equivalent 
representations unless they are disjoint. 
\end{abstract} 

\section{Introduction}
Kakutani's celebrated dichotomy theorem on infinite product measures opened a way to 
mathematical analysis in infinite dimensional phenomena. 
In classical  probability theory, lots of related results have been explored since then, whereas 
in quantum probability, 
this has been mostly done with relations to infinite tensor products of states of quantum algebras. 
Especially quasi-free states of so-called CAR algebras and CCR algebras were investigated much 
around 1970's from the view point of equivalence of representations and  
explicit criteria for their quaisi-equivalence are obtained 
in terms of Hilbert-Schmidt class operators. 

In this paper, we shall add a complement to this old subject by establishing 
dichotomies on quasi-free states:  
Given quasi-free states $\varphi$ and $\psi$ of a CAR or CCR algebra, one of the following 
alternatives occurs. 
\begin{enumerate}
\item
$\varphi$ and $\psi$ are quasi-equivalent. 
\item 
$\varphi$ and $\psi$ are disjoint. 
\end{enumerate}

In the case of CCR algebras, these alternatives are further related with non-vanishing or vanishing 
of transition probabilities between quasi-free states, which therefore inherits the same spirit with 
the original dichotomy due to S.~Kakutani.

\section{Preliminaries}
We shall freely use the standard terminologies in operator algebras and
the notations introduced in \cite{gmta} with some of basic ones repeated here 
for the reader's convenience. 
Given a C*-algebra $C$, $L^2(C)$ denotes 
the standard Hilbert space of the enveloping von Neumann algebra 
$C^{**}$ with the natural left and right actions of $C$ on $L^2(C)$. 
For a state $\varphi$ of $C$, the realizing vector in the positive cone of $L^2(C)$ is denoted by 
$\varphi^{1/2}$. The projection to the closed subspace $\overline{C\varphi^{1/2} C} \subset L^2(C)$ 
is then equal to the central support of $\varphi$, which is a projection in the center of $C^{**}$. 
As a consequence, two states $\varphi$ and $\psi$ produce quasi-equivalent GNS representations 
if and only if $\overline{C \varphi^{1/2}C} = \overline{C\psi^{1/2}C}$, whereas they are disjoint 
if and only if $\overline{C \varphi^{1/2}C} \perp \overline{C\psi^{1/2}C}$. 

In this framework, we have several possibilities for transition probability between states. 
Most known is the Uhlmann's one, which is the square of the so-called fidelity 
$\rho(\varphi,\psi)$ between states $\varphi$, $\psi$ (see \cite{AU} for further information). 
In our context of non-commutative $L^p$-theory 
(see \cite{Ha} among several approaches to the subject and also cf.~\cite{aamt}), 
$\rho(\varphi,\psi)$ is equal to the norm of the positive linear functional 
$|\varphi^{1/2} \psi^{1/2}| = \sqrt{\varphi^{1/2}\psi\varphi^{1/2}}$ in $C^*$ (\cite{R}). 
Another choice is $(\varphi^{1/2}|\psi^{1/2})$, which 
is reduced to the ordinary transition probability for vector states on $\cB(\cH)$ and 
will play similar roles as Hellinger integrals did in the Kakutani's dichotomy theorem 
(\cite{Ka}). 
Thus its vanishing or non-vanishing is our main concern here and the fidelity can be 
equally well used 
for this purpose in view of inequalities 
$(\varphi^{1/2}|\psi^{1/2})^2 \leq \rho(\varphi, \psi)^2 \leq (\varphi^{1/2}|\psi^{1/2})$. 

For free states of quantum algebras, we know decisive results for 
the criterion of quasi-equivalence and the closed formula of transition probability. 
To explain these, we recall relevant definitions. 

Given a real Hilbert space $V$ with inner product $(x,y)$ 
($x,y \in V$), the CAR algebra is a unital C*-algebra $C(V)$ 
linearly generated by elements of $V$ with the relations 
\[
x^* = x, 
\quad 
xy + yx = (x,y)1, 
\qquad 
x, y \in V. 
\]
Likewise, given a real vector space $V$ and an alternating bilinear form $\sigma$ on $V$, 
the CCR C*-algebra is the C*-algebra $C(V,\sigma)$ generated universally by the symbols 
$\{ e^{ix}\}_{x \in V}$ with the relations 
\[
(e^{ix})^* = e^{-ix}, 
\quad 
e^{ix} e^{iy} = e^{-i\sigma(x,y)/2} e^{i(x+y)}, 
\qquad 
x, y \in V. 
\]
Remark that we allow $\sigma$ to be degenerate, whence our CCR C*-algebras may have non-trivial 
centers. 

Given a state $\varphi$ of a CAR algebra $C(V)$, the covariance operator $S$ on the complexified 
Hilbert space $V^\C$ is defined by 
$\varphi(x^*y) = (x,Sy)$, which turns out to be positive and satisfies the relation 
$S + \overline{S} = I$, where $\overline S$ is the complex conjugate of $S$ and $I$ denotes 
the identity operator. 
A state is said to be quasi-free and denoted by $\varphi_S$ 
if it vanishes on the odd part of $C(V)$ 
and satisfies the recursive relation 
\begin{multline*}
\varphi(x_1x_2 \dots x_{2n}) = \varphi(x_1x_2) \varphi(x_3x_4\dots x_{2n})\\ 
- \varphi(x_1x_3) \varphi(x_2x_4\dots x_{2n}) + \cdots 
+ \varphi(x_1x_{2n}) \varphi(x_2 \dots x_{2n}). 
\end{multline*}
If the recursive compuations are worked out completely, the Wick formula is obtained: 
\[
\varphi(x_1x_2 \dots x_{2n}) = 
\sum \pm \prod_{k=1}^n \varphi(x_{i_k}x_{j_k}), 
\]
where the summation is taken over all the way of pairings in $\{ 1, 2, \dots, 2n\}$ and 
$\pm$ is chosen according to the signature of the permutation sequence $(i_1,j_1, \dots, i_n,j_n)$. 

In the case of CCR C*-algebra, a state $\varphi$ is said to be quasi-free and denoted by 
$\varphi_S$ if 
\[
\varphi(e^{ix}) = e^{-S(x,x)/2}, 
\]
where $S$ is a positive sesqui-linear form on the complexfied vector space $V^\C$ and 
is referred to as the covariance form of $\varphi$. 
We know that a positive form $S$ on $V^\C$ is a covariance form if and only if 
\[
S(x,y) - \overline{S}(x,y) = i \sigma(x,y)
\]
for $x, y \in V^\C$. Here $\overline{S}(x,y) = \overline{S(\overline{x},\overline{y})}$ and 
$\sigma$ is sesqui-linearly extended to $V^\C$. 

Given a quas-free state $\varphi_S$, we write 
$L^2(S) = \overline{C \varphi_S^{1/2} C}$ with $C = C(V)$ (CAR case) or $C = C(V,\sigma)$ 
(CCR case). 
Notice here that the same letter is used to stand for a covariance operator or 
a covariance form according to the case of CAR or CCR. 

For quasi-free states, quasi-equivalence criteria were investigated by 
many researchers but let us just indicate 
\cite{Ar}, \cite{PS}, \cite{S}, \cite{SS} and \cite{V} among them. 
The following form is due to \cite{Ar} and \cite{AY}. 

\begin{Theorem}[Quasi-Equivalence Criteria]\label{QEQ}~ 
  \begin{enumerate}
\item 
Let $\varphi_S$ and $\varphi_T$ be quasi-free states of a CAR algebra with covariance operators $S$ 
and $T$. Then $\varphi_S$ and $\varphi_T$ are quasi-equivalent if and only if 
$\sqrt{S} - \sqrt{T}$ is in the Hilbert-Schmidt class. 
  \item 
Let $\varphi_S$ and $\varphi_T$ be quasi-free states of a CCR C*-algebra 
with covariance forms $S$ and $T$. 
Then $\varphi_S$ and $\varphi_T$ are quasi-equivalent if and only if 
$S + {\overline S} \cong T + {\overline T}$ as inner products and 
$\displaystyle 
\sqrt{\frac{S}{S + \overline S}} - \sqrt{\frac{T}{T + \overline T}}$ 
is in the Hilbert-Schmidt class. 
  \end{enumerate}
\end{Theorem}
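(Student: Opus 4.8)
The plan is to reduce quasi-equivalence of the two quasi-free states to a Hilbert--Schmidt condition on a pair of projections, by purifying each state to a \emph{pure} (Fock) quasi-free state on a doubled one-particle space. Concretely, in the CAR case the covariance operator obeys $S + \overline{S} = I$, so $\overline{S} = I - S$, and on $V^\C \oplus V^\C$ I form
\[
 E_S = \begin{pmatrix} S & \sqrt{S}\,\sqrt{I-S} \\ \sqrt{S}\,\sqrt{I-S} & I - S \end{pmatrix} = W_S W_S^*,
 \qquad
 W_S = \begin{pmatrix} \sqrt{S} \\ \sqrt{I-S} \end{pmatrix}.
\]
Since $W_S^* W_S = S + (I-S) = I$, the column $W_S$ is an isometry and $E_S$ is the orthogonal projection onto its range; with respect to the natural conjugation on the doubled space it is a basis projection, and the associated Fock state restricts on the first copy of $C(V)$ to $\varphi_S$. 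Thus $\varphi_S$ and $\varphi_T$ are quasi-equivalent if and only if their purifying Fock states are.

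Two Fock states attached to basis projections $E$ and $F$ generate quasi-equivalent GNS representations exactly when the Bogoliubov transformation intertwining them is unitarily implementable, and by the Shale--Stinespring and Powers--St\o rmer criteria this happens if and only if $E - F$ is Hilbert--Schmidt. Applying this to $E_S$ and $E_T$, the problem becomes: $E_S - E_T$ is Hilbert--Schmidt if and only if $\sqrt{S} - \sqrt{T}$ is. One direction is elementary, since $E_S - E_T = (W_S - W_T)W_S^* + W_T(W_S - W_T)^*$ shows that $E_S - E_T$ is Hilbert--Schmidt whenever the column difference $W_S - W_T$, i.e. the pair $\sqrt{S}-\sqrt{T}$ and $\sqrt{I-S}-\sqrt{I-T}$, is; and the self-duality $\overline{S} = I - S$ gives $\sqrt{I-S} = \overline{\sqrt{S}}$, so, the Hilbert--Schmidt class being invariant under $A \mapsto \overline{A}$, the single hypothesis $\sqrt{S}-\sqrt{T} \in$ Hilbert--Schmidt already supplies both entries.

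The substantive content is the converse: to recover $\sqrt{S} - \sqrt{T} \in$ Hilbert--Schmidt from $E_S - E_T \in$ Hilbert--Schmidt. In the commuting case each $2 \times 2$ block of $E_S, E_T$ is the rank-one projection onto the unit vector $(\sqrt{s},\sqrt{1-s})$, and for rank-one projections $P,Q$ onto unit vectors $u,v$ one has $\|u-v\|^2 \le \|P-Q\|_{HS}^2 \le 2\|u-v\|^2$, which yields the equivalence blockwise. \emph{This non-commutative converse is where the difficulty lies.} The map $A \mapsto \sqrt{A}$ is only H\"older-$1/2$ in norm and is \emph{not} Hilbert--Schmidt--Lipschitz (for $A = \mathrm{diag}(1/n)$ and $B = 0$ the difference $A-B$ is Hilbert--Schmidt while $\sqrt{A}-\sqrt{B}$ is not), so the control must genuinely use the full projection difference and not merely $S - T$. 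I would handle this either through Halmos' two-projection decomposition, reducing $(E_S,E_T)$ to a direct integral of $2\times 2$ blocks on which the rank-one estimate applies, or through the integral representation $\sqrt{A}-\sqrt{B} = \pi^{-1}\int_0^\infty \lambda^{1/2}(A+\lambda)^{-1}(A-B)(B+\lambda)^{-1}\,d\lambda$, estimating the resulting operator in Hilbert--Schmidt norm, with the delicate region being $\lambda$ near $0$, where the resolvents are unbounded precisely on the subspace where $E_S$ and $E_T$ nearly agree.

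For the CCR case the same scheme applies to the \emph{normalized} covariance $A = (S+\overline{S})^{-1/2}\,S\,(S+\overline{S})^{-1/2}$, which satisfies $A + \overline{A} = I$ exactly as in the CAR case, and likewise $B$ formed from $T$; purification and the square-root lemma then deliver the stated condition on $\sqrt{S/(S+\overline{S})} - \sqrt{T/(T+\overline{T})}$. The one extra ingredient is that forming these normalizations and identifying the two doubled one-particle spaces requires the symmetric parts to induce equivalent inner products, which is exactly the preliminary hypothesis $S + \overline{S} \cong T + \overline{T}$; when it fails, the underlying Hilbert spaces are already incomparable and the states are disjoint, so no Hilbert--Schmidt refinement can restore quasi-equivalence.
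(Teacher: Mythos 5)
Your overall strategy---purify each quasi-free state to a Fock state on a doubled one-particle space and invoke the pure-state criterion---is sound, and it is in fact the same mechanism this paper builds in Section~3: your $E_S$ is exactly the paper's quadrature $P$ of $S$, and your one-sentence reduction ``$\varphi_S$ and $\varphi_T$ are quasi-equivalent if and only if their purifying Fock states are'' is precisely what Proposition~\ref{main} combined with the intertwining argument in the proof of Theorem~\ref{dichotomy} establishes (the nontrivial half uses that the central support $z_S$ is a limit of left multiplications; restriction of states does not respect quasi-equivalence in general, so your ``Thus'' conceals a real argument). Bear in mind that the paper itself does not prove Theorem~\ref{QEQ}; it quotes it from \cite{Ar} and \cite{AY}, and your plan follows the architecture of those classical proofs. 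Your easy direction, that $\sqrt{S}-\sqrt{T}$ Hilbert--Schmidt implies $E_S-E_T$ Hilbert--Schmidt, is correct as written.

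The genuine gap is the one you flag and then do not close: deducing that $\sqrt{S}-\sqrt{T}$ is Hilbert--Schmidt from $E_S-E_T$ being Hilbert--Schmidt. This is not a technical remainder; it is the entire content of the necessity direction, and neither of your proposed routes works as described. The Halmos decomposition of the pair $(E_S,E_T)$ produces two-dimensional blocks that bear no relation to the fixed splitting $V^\C\oplus V^\C$, whereas $\sqrt{S}-\sqrt{T}$ is defined in terms of that splitting; the rank-one estimate you use in the commuting case lives on blocks coming from simultaneous diagonalization of $S$ and $T$, which does not exist in general. Worse, $E_S=W_SW_S^*$ determines the isometry $W_S$ only up to right multiplication by a unitary, and Hilbert--Schmidt closeness of range projections never implies Hilbert--Schmidt closeness of chosen isometries (take $W_T=W_Su$ with $u$ unitary: then $E_S=E_T$ while $W_S-W_T$ is arbitrary); so any correct proof must exploit the positivity of the entries $\sqrt{S}$, $\sqrt{I-S}$, which your outline never does. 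The resolvent-integral route has the same defect: since $S-T$ Hilbert--Schmidt alone is insufficient (your own counterexample), the estimate must feed in the corner condition $\sqrt{S(I-S)}-\sqrt{T(I-T)}$ Hilbert--Schmidt, and you give no mechanism for doing so. What $E_S-E_T$ actually yields is the pair of conditions $S-T$ and $\sqrt{S(I-S)}-\sqrt{T(I-T)}$ Hilbert--Schmidt, and the missing lemma is that these two together force $\sqrt{S}-\sqrt{T}$ Hilbert--Schmidt; proving it requires, for example, cutting $V^\C$ by spectral projections of $S$ and $T$ so that on each piece either $S,T$ or $I-S,I-T$ are bounded below and then solving the Sylvester identity $S-T=\sqrt{S}(\sqrt{S}-\sqrt{T})+(\sqrt{S}-\sqrt{T})\sqrt{T}$ there. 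Finally, the CCR half is sketchier still: the necessity of $S+\overline{S}\cong T+\overline{T}$ is itself part of what must be proven, not a hypothesis you may assume, and since this paper allows $\sigma$ to be degenerate, the purified states need not be pure (for $\sigma=0$ one is looking at Gaussian measures, the Feldman--Hajek setting), so ``the same scheme'' does not simply apply.
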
 

The following determinant formulas for transition probabilities are due to \cite{Ar} (CAR case) and 
\cite{gqfs} (CCR case). 

\begin{Theorem}[Transition Probability Formula]\label{TAF}~ 
  \begin{enumerate}
  \item 
Let $S$ and $T$ be covariance operators for a CAR algebra.  
Then 
\[
(\varphi_S^{1/2}|\varphi_T^{1/2})^4 = \det(MM^*), 
\]
where $M = S^{1/2}T^{1/2} + (I-S)^{1/2}(I-T)^{1/2}$. 
\item 
Let $S$ and $T$ be covariance forms for a CCR C*-algebra. Then 
\[
(\varphi_S^{1/2}|\varphi_T^{1/2})^2 = \det 
\left( 
\frac{2\sqrt{AB}}{A+B} 
\right), 
\]
where positive forms $A$ and $B$ are defined by 
\[
2A = S + 2\sqrt{S\overline{S}} + \overline{S}, 
\quad 
2B = T + 2\sqrt{T\overline{T}} + \overline{T}, 
\]
and their geometric mean $\sqrt{AB}$ 
as well as $\sqrt{S\overline{S}}$ and 
$\sqrt{T\overline{T}}$ is in the sense of \cite{PW}. 
  \end{enumerate}
\end{Theorem}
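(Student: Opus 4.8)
The plan is to verify each identity first for finitely many modes, where both sides are genuinely finite quantities, and then to recover the general case by a limiting argument. The two structural facts that make this work are the multiplicativity of both sides under direct-sum decompositions and the Gaussian (quasi-free) composition law, which computes traces of products of quadratic exponentials as determinants over the single-particle space.

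I would begin by recording that if the underlying space splits as an orthogonal (CAR) or symplectic (CCR) direct sum $V = V_1 \oplus V_2$ carrying $S = S_1 \oplus S_2$ and $T = T_1 \oplus T_2$, then the quasi-free states factor as tensor products, $\varphi_S^{1/2} = \varphi_{S_1}^{1/2} \otimes \varphi_{S_2}^{1/2}$, so that the left-hand transition probabilities multiply; the right-hand determinants multiply for the same reason. This reduces each identity to indecomposable blocks. I would then pass to finitely many modes by approximating $S$ and $T$ along an increasing net of finite-dimensional subspaces, using that the overlap is continuous in $S$ and $T$ along the approximating net so that the left-hand sides converge. The right-hand determinants are to be read as $\det(I + \text{trace class})$, and their convergence — to a nonzero limit exactly when the Hilbert-Schmidt criteria of Theorem~\ref{QEQ} hold, and to $0$ otherwise — is what matches the left-hand behaviour; I would establish this convergence in tandem so that the common limit value, possibly $0$, is computed consistently on both sides.

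For finitely many modes I would realize the overlap, via the standard-representation identification of the Preliminaries, as $(\varphi_S^{1/2}|\varphi_T^{1/2}) = \mathrm{Tr}(\rho_S^{1/2}\rho_T^{1/2})$ for the density matrices $\rho_S, \rho_T$ of the two quasi-free states. The engine is the Gaussian composition law: in the faithful case $\rho_S$ and $\rho_T$ are exponentials of quadratics in the field operators, the product $\rho_S^{1/2}\rho_T^{1/2}$ is again a quadratic exponential, and the trace of such an operator is a determinant over the single-particle space (boundary cases such as pure states following by continuity). For a single fermionic mode with occupations $s$ and $t$ this yields $\sqrt{st} + \sqrt{(1-s)(1-t)}$, whose fourth power is exactly the $2\times2$ determinant of $MM^*$: since $I-S = \overline{S}$, one has $M = S^{1/2}T^{1/2} + \overline{S}^{1/2}\overline{T}^{1/2}$, acting as the scalar $\sqrt{st}+\sqrt{(1-s)(1-t)}$ on the doubled space. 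For a single bosonic mode the thermal overlap produces the factor $2\sqrt{ab}/(a+b)$, and assembling blocks by multiplicativity turns the product of block factors into the claimed determinant.

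The hard part is the CCR case when $S$ and $\overline{S}$, equivalently $A$ and $B$, do not commute. Then there is no literal simultaneous symplectic diagonalization into scalar blocks, and the operator geometric mean of Pusz-Woronowicz \cite{PW} is precisely the non-commutative object produced by the composition law: the factor $2\sqrt{AB}/(A+B)$ must be read with $\sqrt{AB}$ in this sense, and the content of the argument is to show that composing the two quadratic exponentials reproduces exactly this geometric-mean expression rather than any naive product. Controlling this interplay — together with checking that the trace-class condition rendering the determinant finite coincides with the quasi-equivalence criterion of Theorem~\ref{QEQ}(ii), so that the degenerate directions of $\sigma$ contribute the expected Hellinger-type factors — is the principal obstacle; the CAR formula is the commutative shadow of the same computation and is comparatively routine.
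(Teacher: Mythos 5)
The paper offers no proof of Theorem~\ref{TAF} for you to be compared against --- it is quoted as a known result, citing \cite{Ar} for the CAR case and \cite{gqfs} for the CCR case --- so your proposal must stand on its own, and it has two genuine gaps. The first is the reduction to finitely many modes. Exact tensor factorization requires finite-dimensional subspaces that reduce both $S$ and $T$ simultaneously; two non-commuting positive operators in general position have no common finite-dimensional invariant subspaces at all, so the increasing net you invoke need not exist. If instead you merely restrict the states to $C(V_n)$ for an exhausting net $V_n \nearrow V$, the restrictions are quasi-free with the compressed covariances $P_nSP_n$, $P_nTP_n$, and then neither side of the identity is under control: the claimed continuity of $(\varphi_S^{1/2}|\varphi_T^{1/2})$ under restriction is a martingale-type convergence theorem for $L^2$ transition amplitudes which you assert but do not prove (there is in general no state-preserving conditional expectation onto $C(V_n)$, so it does not follow from monotonicity alone), while on the right-hand side the operator $M_n$ built from the compressed covariances is \emph{not} the compression $P_nMP_n$, because square roots do not commute with compressions; hence $\det(M_nM_n^*) \to \det(MM^*)$ --- including joint convergence to $0$ in the disjoint case --- is itself an unproven claim of the same order of difficulty as the theorem.

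The second gap is that you explicitly defer ``the principal obstacle'': showing that composing the two Gaussian square roots produces exactly the Pusz--Woronowicz geometric mean $\sqrt{AB}$ of the non-commuting pair $A$, $B$, rather than some other symmetrization. But that identity \emph{is} the CCR formula; once $A$ and $B$ commute, the statement collapses to the scalar single-mode computation you carried out (which is correct, as is your CAR one-mode check of the fourth power). So what the proposal actually establishes is the simultaneously diagonalizable case plus a program. The proofs in the cited sources avoid both difficulties by working directly in infinite dimensions: one identifies $\varphi_S^{1/2}$ with the vacuum vector of the quadrature Fock state $\varphi_P$ on $C(V\oplus iV)$ --- exactly the mechanism of Proposition~\ref{main} in this paper --- does the same for $\varphi_T^{1/2}$, and computes the inner product of the two vacua by a Bogoliubov-transformation determinant formula, with the Pusz--Woronowicz calculus \cite{PW} built in from the start on the CCR side; no finite-dimensional approximation is then required.
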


\section{CAR Dichotomy}
Let $\epsilon$ be the parity automorphism 
of $C(V)$
and define a bounded linear operator $\pi(\xi \oplus \eta)$ on $L^2(C(V))$ by 
\[
\pi(\xi \oplus \eta)\psi^{1/2} 
= \xi\psi^{1/2} + (\psi\circ \epsilon)^{1/2}\eta.  
\]
Here $\xi, \eta \in V$ and $\psi$ is a state of $C(V)$. Then 
\[
\pi(\overline{\xi} \oplus - \overline{\eta}) \pi(\xi' \oplus \eta')
+ \pi(\xi' \oplus \eta')\pi(\overline{\xi} \oplus - \overline{\eta}) 
= (\xi|\xi') + (\eta|\eta')
\]
and $\pi$ is extended to a *-representation of $C(V\oplus iV)$, which is referred to as 
the \textbf{quadrate representation} of $C(V\oplus iV)$. 
Here $iV$ denotes the real part of $V^\C$ with respect to the conjugation given by 
$x \mapsto -\overline{x}$. 
Note that, if $\psi$ is an even state, i.e., $\psi \circ \epsilon = \psi$, 
\[
\pi(C(V\oplus iV))\psi^{1/2} 
= C(V)\psi^{1/2}C(V). 
\]
In particular, for a quasi-free state $\varphi_S$ 
of covariance operator $S$, $\pi$ leaves the closed central subspace 
$L^2(S) = \overline{C(V)\varphi_S^{1/2} C(V)}$ invariant. 
Let $\pi_S$ be the associated subrepresentation of $C(V \oplus iV)$. 

We define the \textbf{quadrature} of a state $\varphi$ of $C(V)$ 
to be a state $\Phi$ of $C(V\oplus iV)$ given by 
\[
\Phi(x) = (\varphi^{1/2}|\pi(x)\varphi^{1/2}), 
\quad 
x \in C(V\oplus iV).  
\] 
The following is well-known (see \cite{Ar} for example). 

\begin{Lemma}
The following conditions on a covariance operator $S$ are equivalent. 
\begin{enumerate}
\item 
$\ker S = \{0\}$.
\item 
$\ker(I-S) = \{ 0\}$. 
\item 
$S = (1+e^H)^{-1}$ with $H$ a self-adjoint operator on $V^\C$ satisfying 
$\overline{H} = -H$. 
\end{enumerate}
A covariance operator $S$ is said to be \textbf{non-degenerate} if it satisfies 
these equivalent conditions. 
\end{Lemma}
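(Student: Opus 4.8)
The plan is to run everything off the defining relation $S + \overline S = I$ of a covariance operator. Writing $J$ for the antilinear involution $x \mapsto \overline x$ on $V^\C$ (so that $\overline S = JSJ$), this relation reads $I - S = \overline S = JSJ$. For the equivalence of (i) and (ii) I would use only that $J$ is an antilinear bijection with $J^2 = I$: from $JSJv = 0 \Leftrightarrow SJv = 0$ one gets $\ker(I - S) = \ker(JSJ) = J(\ker S)$, and since $J$ is a bijection, $\ker(I-S)$ is trivial exactly when $\ker S$ is. This settles (i) $\Leftrightarrow$ (ii) at once.

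For (iii) $\Rightarrow$ (i) and (ii), suppose $S = (1+e^H)^{-1}$ with $H$ self-adjoint. Then $1 + e^H \ge 1$ is an injective (possibly unbounded) positive operator, so its bounded inverse $S$ is injective and $\ker S = \{0\}$. Rewriting $I - S = 1 - (1 + e^H)^{-1} = (1 + e^{-H})^{-1}$ exhibits $I - S$ as another such inverse, so $\ker(I-S) = \{0\}$ as well.

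The substantive direction is (i) $\Rightarrow$ (iii). Assuming $\ker S = \{0\}$ (hence $\ker(I-S) = \{0\}$ by the first step), I would define $H = \log(S^{-1} - I)$ through the Borel functional calculus of the positive operator $S^{-1} - I = (I - S)S^{-1}$; this is legitimate since $0 \le S \le I$ makes $S^{-1} - I$ positive and $\ker(I-S) = \{0\}$ makes it injective, so its logarithm is a well-defined self-adjoint operator. Then $e^H = S^{-1} - I$ gives $1 + e^H = S^{-1}$ and therefore $S = (1+e^H)^{-1}$. For $\overline H = -H$ I would conjugate by $J$: since $\log$ is real-valued and $J f(A) J = f(\overline A)$ for any Borel $f$, the relation $\overline S = I - S$ yields $\overline H = \log(\overline S^{-1} - I) = \log((I-S)^{-1} - I)$; a direct identity gives $(I - S)^{-1} - I = S(I-S)^{-1} = (S^{-1} - I)^{-1} = e^{-H}$, whence $\overline H = -H$.

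The main obstacle is technical rather than conceptual: because $0$ and $1$ may lie in the spectrum of $S$ as accumulation points, the operators $S^{-1}$ and $H$ are in general unbounded, and the identities for inverses, for $\log$, and for their behaviour under the antilinear $J$ must be checked on the correct domains. The computation stays clean because conjugation by $J$ merely replaces $S$ by $\overline S = I - S$ throughout, and all the functions involved ($\log$ together with rational expressions having real coefficients) are real-valued.
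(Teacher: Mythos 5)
Your proof is correct. Note that the paper itself does not prove this lemma at all --- it is stated as ``well-known'' with a reference to Araki \cite{Ar} --- so there is no in-paper argument to compare against; your functional-calculus proof is the standard route and fills the gap cleanly. Each step checks out: (i) $\Leftrightarrow$ (ii) follows from $I-S = \overline{S} = JSJ$ together with $\ker(JSJ) = J(\ker S)$ and the bijectivity of $J$; (iii) $\Rightarrow$ (i), (ii) follows since $S = (1+e^H)^{-1}$ and $I-S = (1+e^{-H})^{-1}$ are inverses of self-adjoint operators bounded below by $1$, hence injective; and (i) $\Rightarrow$ (iii) amounts to setting $H = g(S)$ with $g(s) = \log\bigl((1-s)/s\bigr)$, which is a finite, real-valued Borel function almost everywhere with respect to the spectral measure of $S$ precisely because neither $0$ nor $1$ is an eigenvalue --- this is where both hypotheses $\ker S = \{0\}$ and $\ker(I-S) = \{0\}$ enter. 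The relation $\overline{H} = -H$ is then immediate from $g(1-s) = -g(s)$, which is the same computation as your identity $(I-S)^{-1} - I = (S^{-1}-I)^{-1} = e^{-H}$. One small wording caveat: the identity $Jf(A)J = f(\overline{A})$ holds for \emph{real-valued} Borel $f$ (in general $Jf(A)J = \overline{f}(\overline{A})$ because $J$ is antilinear), but since every function you invoke ($\log$ and rational expressions with real coefficients) is real-valued on the relevant spectra, this does not affect the argument.
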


Given a covariance operator $S$ on $V^\C$, its quadrature is defined to be the projection 
\[
P = 
\begin{pmatrix}
S & \sqrt{S(I-S)}\\
\sqrt{S(I-S)} & I-S
\end{pmatrix}  
\]
on $V^\C\oplus V^\C$, which is a covariance operator for the real Hilbert space $V \oplus iV$. 

\begin{Proposition}\label{main}
The quadrature of $\varphi_S$ is equal to the Fock state $\varphi_P$. 
In particular, the representation $\pi_S$ is irreducible. 
\end{Proposition}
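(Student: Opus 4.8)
The plan is to prove the first assertion --- that the quadrature $\Phi$ of $\varphi_S$ coincides with $\varphi_P$ --- and to read off irreducibility from it. The paper already records that $P$ is a covariance operator for $V\oplus iV$; a short block computation, using that $\sqrt{S(I-S)}$ commutes with $S$, shows in addition that $P^2=P$ (the diagonal blocks return $S^2+S(I-S)=S$ and $S(I-S)+(I-S)^2=I-S$, the off-diagonal ones $\sqrt{S(I-S)}\,(S+I-S)=\sqrt{S(I-S)}$). A covariance operator that is a projection defines a Fock state, so $\varphi_P$ is pure and its GNS representation is irreducible. Because $\varphi_S$ is even, the identity $\pi(C(V\oplus iV))\varphi_S^{1/2}=C(V)\varphi_S^{1/2}C(V)$ exhibits $\varphi_S^{1/2}$ as a unit cyclic vector for $\pi_S$ in $L^2(S)$, so $(\pi_S,L^2(S),\varphi_S^{1/2})$ is exactly the GNS triple of $\Phi$. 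Hence it suffices to prove $\Phi=\varphi_P$, and irreducibility of $\pi_S$ then follows at once.

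To compute $\Phi$ I would split $\pi(\xi\oplus\eta)=\lambda(\xi)+\rho(\eta)$ on the even vector $\varphi_S^{1/2}$, where $\lambda$ and $\rho$ denote the left and right CAR actions and the twist $(\psi\circ\epsilon)^{1/2}$ disappears since $\varphi_S\circ\epsilon=\varphi_S$. The key input is the Tomita identity $\rho(\eta)\varphi_S^{1/2}=\Delta^{1/2}\lambda(\eta)\varphi_S^{1/2}$ together with the explicit modular flow of a quasi-free state: writing $S=(1+e^H)^{-1}$, the flow acts on the one-particle space by the Bogoliubov transformation $e^{itH}$, whence $\rho(\eta)\varphi_S^{1/2}=\lambda(e^{H/2}\eta)\varphi_S^{1/2}$. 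Feeding this into the four inner products of $(\pi(\xi_1\oplus\eta_1)\varphi_S^{1/2}\,|\,\pi(\xi_2\oplus\eta_2)\varphi_S^{1/2})$ and using $(\lambda(\xi)\varphi_S^{1/2}|\lambda(\eta)\varphi_S^{1/2})=(\xi,S\eta)$ together with the algebraic identities $Se^{H/2}=\sqrt{S(I-S)}$ and $e^{H/2}Se^{H/2}=I-S$, the four terms assemble precisely into the four blocks of $P$. Thus the two-point function of $\Phi$ is $P$, and in particular $\|\pi(w)\varphi_S^{1/2}\|^2=(w,Pw)$ for every $w\in V^\C\oplus V^\C$.

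From this last identity, $\pi(w)\varphi_S^{1/2}=0$ exactly when $Pw=0$; on the spanning vectors $(-\sqrt{S(I-S)}c)\oplus Sc$ and $((I-S)d)\oplus(-\sqrt{S(I-S)}d)$ of $\ker P$ the vanishing is also visible directly, since $e^{H/2}S=\sqrt{S(I-S)}$ and $e^{H/2}\sqrt{S(I-S)}=I-S$. A unit cyclic vector annihilated by $\{\pi(w):w\in\ker P\}$ has all of its correlation functions fixed by the anticommutation relations (push every annihilator to the right), so $(\pi_S,L^2(S),\varphi_S^{1/2})$ is unitarily equivalent to the Fock representation attached to $P$; equivalently $\Phi=\varphi_P$. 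Combined with the first paragraph this yields the proposition.

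The step I expect to be the main obstacle is the modular input, in two respects. First, one must justify identifying the right action $\rho$ arising in the quadrate representation, with its parity twist, with the genuine Tomita right action, so that $\rho(\eta)\varphi_S^{1/2}=\Delta^{1/2}\lambda(\eta)\varphi_S^{1/2}$ is legitimate in the presence of the grading. Second, when $S$ is degenerate ($\ker S\ne\{0\}$ or $\ker(I-S)\ne\{0\}$) the operator $e^{H/2}$ is unbounded and $H$ is only defined on a dense domain, so the formula $\rho(\eta)\varphi_S^{1/2}=\lambda(e^{H/2}\eta)\varphi_S^{1/2}$ must be read as a closed-operator identity; I would circumvent this by approximating $S$ by non-degenerate covariances --- the final expressions $S$, $I-S$ and $\sqrt{S(I-S)}$ being bounded and norm-continuous in $S$ --- and then passing to the limit.
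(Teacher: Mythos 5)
Your overall skeleton coincides with the paper's: you reduce the proposition to showing that $\varphi_S^{1/2}$ satisfies the Fock annihilation property for $P$ (equivalently, that the two-point function of the quadrature is the projection $P$), and then get $\Phi=\varphi_P$ from the standard characterization of Fock states and irreducibility from purity. Your treatment of the non-degenerate case is also the paper's argument in modular dress: the Tomita identity $\rho(\eta)\varphi_S^{1/2}=\Delta^{1/2}\lambda(\eta)\varphi_S^{1/2}$ combined with the Bogoliubov modular flow $e^{itH}$ is exactly the KMS property that the paper invokes, and your identities $Se^{H/2}=\sqrt{S(I-S)}$, $e^{H/2}Se^{H/2}=I-S$ are correct. (One small correction there: $e^{H/2}$ is unbounded whenever $0$ or $1$ lies in the spectrum of $S$, not only in the degenerate case, so the dense-domain argument is needed already for non-degenerate $S$; it is routine.)

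The genuine gap is the degenerate case, and the fix you propose does not work. First, when $\ker S(I-S)\neq\{0\}$ the difficulty is not that $e^{H/2}$ is unbounded: no such $H$ exists, and more fundamentally $\varphi_S$ is not separating for $C(V)''$ (the CAR algebra is simple, so $\pi_{\varphi_S}$ is faithful, while $\varphi_S(\zeta^*\zeta)=0$ for $0\neq\zeta\in\ker S$), so there is no modular operator and no ``closed-operator identity'' to regularize. Second, and decisively, ``passing to the limit'' requires the quantities you compute --- vector identities such as $(\sqrt{I-S_n}\,\zeta)\varphi_{S_n}^{1/2}=\varphi_{S_n}^{1/2}(\sqrt{S_n}\,\zeta)$ in $L^2(C(V))$ --- to converge to the corresponding statements for $S$, and norm-continuity of $S\mapsto S,\ I-S,\ \sqrt{S(I-S)}$ gives no control over the map $S\mapsto\varphi_S^{1/2}$, which is violently discontinuous in the operator-norm topology on covariances; indeed, this discontinuity is precisely the dichotomy phenomenon of this paper. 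Concretely, let $S=P_0$ be a basis projection on an infinite-dimensional $V^\C$ and $S_\varepsilon=(1-\varepsilon)P_0+\varepsilon(I-P_0)$, which is a non-degenerate covariance operator with $\|S_\varepsilon-P_0\|=\varepsilon$; then $M=S_\varepsilon^{1/2}P_0^{1/2}+(I-S_\varepsilon)^{1/2}(I-P_0)^{1/2}=\sqrt{1-\varepsilon}\,I$, so Theorem~\ref{TAF}(i) gives $(\varphi_{S_\varepsilon}^{1/2}|\varphi_{P_0}^{1/2})^4=\det\bigl((1-\varepsilon)I\bigr)=0$, i.e.\ $\|\varphi_{S_\varepsilon}^{1/2}-\varphi_{P_0}^{1/2}\|=\sqrt2$ for every $\varepsilon>0$. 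Thus your approximating identities live in subspaces $L^2(S_n)$ orthogonal to $L^2(S)$, and their ``limit'' carries no information about $\varphi_S^{1/2}$. If $\ker S(I-S)$ is non-separable the strategy fails outright: for any non-degenerate $S'$ the operator $\sqrt{S'}-\sqrt{S}$ restricted to $\ker S$ is injective, and an injective Hilbert--Schmidt operator forces a separable domain, so by Theorem~\ref{QEQ}(i) and Lemma~\ref{key} every non-degenerate quasi-free state is orthogonal to $\varphi_S$ (this is why Corollary~\ref{perturbation} assumes separability of the kernel). Nor can you retreat to convergence of moments: the off-diagonal entries of the quadrature two-point function are correlations $(\xi\varphi_S^{1/2}|\varphi_S^{1/2}\eta)$ of the type $\mathrm{tr}(\varphi^{1/2}\xi^*\varphi^{1/2}\eta)$, not values of the state $\varphi_S$, and their modulus of continuity in the state is governed exactly by $\|\varphi_1^{1/2}-\varphi_2^{1/2}\|_2$ --- the quantity that fails to be small. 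This is the hole the paper fills with a different idea: it splits $V^\C$ along $\ker S(I-S)$, builds the graded tensor factorization $C(V)''\cong C(W)''\otimes\cB(\overline{C(W^\perp)\psi^{1/2}})$, verifies the annihilation identity directly on the degenerate factor (where $\varphi_S$ restricts to a Fock state) and by KMS on the non-degenerate factor. Some argument of this kind, handling the degenerate part by hand rather than by approximation, is what your proposal is missing.
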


\begin{proof} 
Recall that the Fock vacuum $\varphi_P^{1/2}$ is characterized by the vanishing property 
under the left multiplication of the range of $\overline{P}$. 
Since the range of $\overline{P}$ is equal to 
$\{ \sqrt{I-S}\zeta \oplus - \sqrt{S}\zeta; \zeta \in V^\C\}$, 
it suffices to show that 
\[
(\sqrt{I-S}\zeta) \varphi_S^{1/2} = \varphi_S^{1/2}(\sqrt{S}\zeta)\quad 
\text{for $\zeta \in V^\C$.}
\] 
If $S$ is non-degenerate, this follows from the fact that 
$\varphi_S$ is a KMS-state with repsect to 
the one-parameter automorphism group induced from the Bogoliubov transformations 
$\{ e^{itH}\}_{t \in \R}$ (see \cite[Example 5.3.24]{BR2} for example). 

To deal with the degenerate case, 
let $E$ be the projection to $\ker S(I-S)$ and write $(I-E) V^\C = W^\C$ 
with $W$ a closed real subspace of $V$. 
Let $\varphi_W$ (resp.~$\psi$) be the restriction of $\varphi_S$ to 
the C*-subalgebra $C(W) \subset C(V)$ (resp.~the C*-subalgebra $C(W^\perp) \subset C(V)$), 
which is a quasi-free state of the reduced covariance operator $S(I-E)$ (resp.~$SE$). 
Let $u$ be the unitary operator on the Fock space $\overline{C(W^\perp)\psi^{1/2}}$ defined by 
\[
u(\eta_1\cdots \eta_n\psi^{1/2}) = (-1)^n \eta_1\cdots \eta_n \psi^{1/2}\quad 
\text{for $\eta_1, \dots, \eta_n \in W^\perp$,}
\]
which implements the parity automorphism of $C(W^\perp)$. 

A representation $\theta$ of $C(V)$ on 
$\overline{C(W)\varphi_W^{1/2}} \otimes \overline{C(W^\perp) \psi^{1/2}}$ 
is then defined by the correspondance 
\[
\xi + \eta \mapsto \xi\otimes u + 1\otimes \eta,  
\quad 
\xi \in W,\ \eta \in W^\perp  
\]
on generators, where $\xi$ and $\eta$ on the right side denote operators by left multiplication. 
From $u \psi^{1/2} = \psi^{1/2}$ and the Wick formula, we have the equality 
\begin{multline*}
(\varphi_W^{1/2}\otimes \psi^{1/2}| 
(\xi_1\cdots \xi_m \otimes u^m\eta_1\cdots \eta_n) (\varphi_W^{1/2}\otimes \psi^{1/2}))\\ 
= \varphi_W(\xi_1\cdots \xi_m) \psi(\eta_1\cdots \eta_n) 
= \varphi_S(\xi_1\cdots \xi_m \eta_1\cdots \eta_n), 
\end{multline*}
which implies that 
$\xi_1\cdots \xi_m \eta_1 \cdots \eta_n \varphi_S^{1/2} 
\mapsto \phi(\xi_1\cdots \xi_m \eta_1\cdots \eta_n) 
(\varphi_W^{1/2} \otimes \psi^{1/2})$ gives rise to an isometry $U$. 
Since the operator $u$ is approximated by elements in $C(W^\perp)$ on 
$\overline{C(W^\perp)\psi^{1/2}}$ thanks to the irreducibility of representation, 
$U$ is in fact surjective and $\theta$ is extended to an isomorphism 
$C(V)'' \to C(W)'' \otimes \cB(\overline{C(W^\perp)\psi^{1/2}})$ of von Neumann algebras 
so that $\varphi_S = (\varphi_W\otimes \psi)\theta$, which in turn induces 
an isometric isomorphism
\[
\Theta: \overline{C(V) \varphi_S^{1/2} C(V)} 
\to
\overline{C(W) \varphi_W^{1/2} C(W)} \otimes 
\overline{C(W^\perp) \psi^{1/2} C(W^\perp)} 
\]
by the relation 
\[
\Theta(x\varphi_S^{1/2}x') = \theta(x) (\varphi_W^{1/2}\otimes \psi^{1/2}) \theta(x'), 
\quad 
x, x' \in C(V). 
\]

Now, for $\xi + \eta \in V^\C = (I-E)V^\C + EV^\C$, in view of 
$((I-S)\eta) \psi^{1/2} = 0 = \psi^{1/2} (S\eta)$, we see that 
\begin{align*}
\Theta(\varphi_S^{1/2}(\sqrt{S}(\xi + \eta)) 
&= (\varphi_W^{1/2} \otimes \psi^{1/2}) \theta(\sqrt{S}(\xi + \eta))\\ 
&= (\varphi_W^{1/2} \otimes \psi^{1/2}) (\sqrt{S}\xi\otimes u + 1\otimes \sqrt{S}\eta)\\ 
&= \varphi_W^{1/2}(\sqrt{S}\xi) \otimes \psi^{1/2} 
= (\sqrt{I-S}\xi)\varphi_W^{1/2} \otimes \psi^{1/2}\\
&= \theta(\sqrt{I-S}(\xi + \eta)) (\varphi_W^{1/2} \otimes \psi^{1/2})\\ 
&= \Theta(\sqrt{I-S}(\xi + \eta) \varphi_S^{1/2}). 
\end{align*}
\end{proof}

\begin{Theorem}[Dichotomy]\label{dichotomy}
Let $S$ and $T$ be covariance operators for a CAR algebra $C(V)$ with $P$ and $Q$ their 
quadratures. Let $L^2(S) = \overline{C(V)\varphi_S^{1/2}C(V)}$ and similarly for $L^2(T)$. 
Then $L^2(S) \perp L^2(T)$ unless $L^2(S) = L^2(T)$. 
Moreover, we have 
\[
(\varphi_P^{1/2}|\varphi_Q^{1/2}) = (\varphi_S^{1/2}|\varphi_T^{1/2})^2. 
\] 
\end{Theorem}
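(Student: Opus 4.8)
The plan is to prove the two assertions separately: the dichotomy will come from the irreducibility supplied by Proposition~\ref{main}, while the transition-probability identity will come from the determinant formula of Theorem~\ref{TAF}(i).

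For the dichotomy, I would first note that Proposition~\ref{main} makes both $\pi_S$ and $\pi_T$ irreducible. On $L^2(S)$ the quadrate representation is generated by the left multiplications $\pi(\xi\oplus 0)$, which span the GNS von Neumann algebra $M_S = \pi_{\varphi_S}(C(V))''$, together with the parity-twisted right multiplications $\pi(0\oplus\eta)$, which span its commutant $M_S'$. Irreducibility of $\pi_S$ thus forces $M_S\vee M_S' = \cB(L^2(S))$, hence $Z(M_S) = \C$, so $\varphi_S$ is a factor state and its central support $z_S$---the projection onto $L^2(S)$---is a minimal projection of the abelian algebra $Z(C(V)^{**})$; the same holds for $z_T$. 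Since two atoms in an abelian von Neumann algebra are either equal or orthogonal, one concludes $L^2(S) = L^2(T)$ or $L^2(S)\perp L^2(T)$.

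For the transition probability, I would apply Theorem~\ref{TAF}(i) to both pairs, so that $(\varphi_S^{1/2}|\varphi_T^{1/2})^4 = \det(MM^*)$ with $M = \sqrt{S}\sqrt{T} + \sqrt{I-S}\sqrt{I-T}$, while $(\varphi_P^{1/2}|\varphi_Q^{1/2})^4 = \det(M_{PQ}M_{PQ}^*)$ with $M_{PQ} = PQ + (I-P)(I-Q)$, the square roots disappearing because $P$ and $Q$ are projections. As $Q(I-Q)=0$, the cross terms cancel and $M_{PQ}M_{PQ}^* = PQP + (I-P)(I-Q)(I-P)$. I would then introduce the isometries $W_P\colon \eta\mapsto \sqrt{S}\,\eta\oplus\sqrt{I-S}\,\eta$ and $\tilde W_P\colon \mu\mapsto \sqrt{I-S}\,\mu\oplus(-\sqrt{S}\,\mu)$, with analogues $W_Q,\tilde W_Q$, so that $P = W_PW_P^*$ and $I-P = \tilde W_P\tilde W_P^*$; a short computation gives $W_P^*W_Q = \tilde W_P^*\tilde W_Q = M$. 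Hence $PQP = W_P\,MM^*\,W_P^*$ and $(I-P)(I-Q)(I-P) = \tilde W_P\,MM^*\,\tilde W_P^*$, and because $[\,W_P\mid\tilde W_P\,]$ is unitary on $V^\C\oplus V^\C$, the operator $M_{PQ}M_{PQ}^*$ is unitarily equivalent to $MM^*\oplus MM^*$. Therefore $\det(M_{PQ}M_{PQ}^*)=\det(MM^*)^2$, giving $(\varphi_P^{1/2}|\varphi_Q^{1/2})^4 = (\varphi_S^{1/2}|\varphi_T^{1/2})^8$, and taking positive fourth roots yields the claim.

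I expect the only real obstacle to lie in the first step: justifying that irreducibility of the quadrate subrepresentation $\pi_S$ is genuinely equivalent to factoriality of $M_S$. This rests on the self-dual CAR formalism, in which one must check that the parity-twisted right action fills out the whole commutant $M_S'$ rather than a proper subalgebra, so that the $\Z_2$-grading is correctly accommodated. Once that identification is secured, both the dichotomy and the splitting $MM^*\oplus MM^*$ in the determinant computation follow by routine verification.
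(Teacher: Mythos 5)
Your proof of the dichotomy has a genuine gap, and it sits exactly at the step you flagged as the ``only real obstacle'': that step is not a delicate verification, it is false. The operators $\pi(0\oplus\eta)$ do not lie in the commutant $M_S'$. By the anticommutation relation defining the quadrate representation (take $\xi=0$ and $\eta'=0$ there), $\pi(0\oplus\eta)$ \emph{anti}commutes with $\pi(\xi'\oplus 0)$ for $\xi',\eta\in V$, so the von Neumann algebra $N_S$ generated by the twisted right multiplications is not $M_S'$; it is generated by right multiplications by \emph{even} elements of $C(V)$ together with operators of the form $\Gamma R_\eta$, where $\Gamma$ is the parity unitary on $L^2(S)$. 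Hence irreducibility of $\pi_S$ says $(M_S\vee N_S)'=\C$, which does not give $M_S\vee M_S'=\cB(L^2(S))$, i.e.\ does not give $Z(M_S)=\C$. Indeed the implication ``$\pi_S$ irreducible $\Rightarrow$ $\varphi_S$ factorial'' is demonstrably false: by Proposition~\ref{main}, $\pi_S$ is irreducible for \emph{every} covariance operator $S$, while non-factorial quasi-free states of $C(V)$ exist --- the first Remark after the theorem concerns precisely these exceptional states. The simplest instance is $V=\R$, $S=1/2$: then $C(V)\cong\C\oplus\C$, $\varphi_S$ is the trace, $M_S$ is maximal abelian (hence not a factor) on $L^2(S)=\C^2$, and its central support $1$ is not an atom of the center; nevertheless $\pi_S(C(V\oplus iV))''=M_2(\C)$ is irreducible, and $M_S\vee M_S'=M_S\subsetneq\cB(L^2(S))$. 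So the chain ``irreducible $\Rightarrow$ factor state $\Rightarrow$ central support is an atom $\Rightarrow$ equal or orthogonal'' breaks at the first link, and your argument covers only factor states, for which the dichotomy is automatic anyway; the entire content of the theorem is the non-factorial case.

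The paper handles that case by a different mechanism, which is where the quadrate representation does real work. If a unitary $U:L^2(S)\to L^2(T)$ intertwines $\pi$, then it intertwines all strong limits of operators from $\pi(C(V\oplus 0))$, i.e.\ of left multiplications. The central support $z_S$ commutes with the right action of $C(V)$, hence by the commutation theorem it is such a strong limit, so $U(\xi)=U(z_S\xi)=z_SU(\xi)$ for $\xi\in L^2(S)$; since $z_S$ is the projection onto $L^2(S)$, this forces $L^2(T)=U(L^2(S))\subseteq L^2(S)$ and then $L^2(S)=L^2(T)$. If no intertwiner exists, Schur's lemma applied to the two irreducible subrepresentations gives $L^2(S)\perp L^2(T)$. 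Factoriality is never invoked.

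Your treatment of the ``moreover'' identity, by contrast, is correct and genuinely different from the paper's. The isometries do satisfy $W_PW_P^*=P$, $\tilde W_P\tilde W_P^*=I-P$, $W_P^*W_Q=\tilde W_P^*\tilde W_Q=M$, and $[\,W_P\mid\tilde W_P\,]$ is unitary, so $M_{PQ}M_{PQ}^*=PQP+(I-P)(I-Q)(I-P)$ is unitarily equivalent to $MM^*\oplus MM^*$, whence $\det(M_{PQ}M_{PQ}^*)=\det(MM^*)^2$ and the claim follows from Theorem~\ref{TAF}(i) by taking positive fourth roots. The paper gets the same identity without the determinant formula: by Proposition~\ref{main}, $\varphi_P$ and $\varphi_Q$ \emph{are} the quadrature states $\Phi_S$ and $\Phi_T$, i.e.\ the vector states defined by $\varphi_S^{1/2}$ and $\varphi_T^{1/2}$ in the irreducible quadrate representation, so $(\varphi_P^{1/2}|\varphi_Q^{1/2})=\mathrm{trace}\bigl(|\varphi_S^{1/2})(\varphi_S^{1/2}|\,|\varphi_T^{1/2})(\varphi_T^{1/2}|\bigr)=(\varphi_S^{1/2}|\varphi_T^{1/2})^2$. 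Your computation is a valid substitute for that half of the proof, but it cannot repair the dichotomy half.
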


\begin{proof}
Since $\pi$ is irreducible on both of 
$L^2(S)$ and $L^2(T)$, 
they are either unitarily equivalent or disjoint as representations of $C(V\oplus iV)$. 
Let $z_S$ be the projection to 
$\overline{\pi(C(V\oplus iV)) \varphi_S^{1/2}} = L^2(S)$ and similarly for $z_T$. 
Since $z_S$ is in the commutant of the right representation of $C(V)$ on $L^2(S)$, 
it is approximated by the left multiplication of $C(V)$, i.e., by elements in $\pi(C(V\oplus 0))$. 
Thus, if a unitary $U: L^2(S) \to L^2(T)$ intertwines $\pi$, then 
\[
U(\xi) = U(z_S\xi) = z_SU(\xi), 
\quad 
\xi \in L^2(S)
\]
shows that $z_S = z_T$, i.e., $L^2(S) = L^2(T)$. 

Otherwise, by the irreducibility of 
$\pi(C(V \oplus iV))$ on both $L^2(S)$ and $L^2(T)$, 
\[
\pi(C(V \oplus iV)) \varphi_S^{1/2} \perp 
\pi(C(V \oplus iV)) \varphi_T^{1/2}, 
\]
i.e., $z_S \perp z_T$. 
Then $\Phi_S^{1/2}$ and $\Phi_T^{1/2}$ belong to inequivalent irreducible components of 
a representatiopn of $C(V \oplus iV)$, whence they are orthogonal.  

In either case, we have 
\[
(\Phi_S^{1/2}|\Phi_T^{1/2}) = \text{trace}\Bigl( 
|\varphi_S^{1/2})(\varphi_S^{1/2}|\, 
|\varphi_T^{1/2})(\varphi_T^{1/2}|
\Bigr)
= (\varphi_S^{1/2}|\varphi_T^{1/2})^2. 
\]
\end{proof}

\begin{Remark}
For factorial states, this kind of dichotomy is an immediate consequence of Schur's lemma. 
In the case of CAR, non-factorial quasi-free states are known to be
decomposed into two pure states and we can work explicitly with these exceptional cases 
to get the dichotomy. 
\end{Remark}

\begin{Remark}
Let $C_0(V)$ be the even part of $C(V)$, which is the fixed point subalgebra by 
the parity automorphism. Let $S$ be a covariance operator 
such that $S(I-S)$ is in the trace class and $\ker(2S - I)$ is even-dimensional. 
Then we can find Fock states $\varphi_j$ ($j=1,2$) of $C(V)$ such that 
$\varphi_j$ is quasi-equivalent to $\varphi_S$ ($j=1,2$),
the restrictions $\psi_j = \varphi_j|_{C_0(V)}$ are inequivalent pure states of $C_0(V)$, 
and $\varphi_S|_{C_0(V)} = (\psi_1 + \psi_2)/2$. 
Thus $\varphi_S|_{C_0(V)}$ is neither quasi-equivalent nor disjoint to 
both of $\psi_j$. See \cite{M} for more information. 
\end{Remark}

\section{CCR Dichotomy}
A state $\varphi$ of a C*-algebra $C$ is said to be 
\textbf{standard} if $\overline{C \varphi^{1/2}} = \overline{\varphi^{1/2}C}$. 

\begin{Example}\label{nonstandard}
Let $\varphi = \varphi_1 \otimes \varphi_2$ be a product state 
on $C = C_1 \otimes C_2$ with $\varphi_1$  a pure state of $C_1$. 
Then 
$\overline{C_1\varphi_1^{1/2} C_1} 
\cong \overline{C_1 \varphi_1^{1/2}} \otimes \overline{\varphi_1^{1/2} C_1}$ 
and $\varphi$ is not standard if 
$\dim \overline{C_1\varphi_1^{1/2}} = \dim \overline{\varphi_1^{1/2} C_1} \geq 2$. 
\end{Example} 

\begin{Lemma}~ 
\begin{enumerate}
\item
Let $S$ be the covariance form of a quasi-free state $\varphi$ of a CCR C*-algebra $C(V,\sigma)$. 
Then $\varphi$ is standard if and only if the kernel of the ratio operator 
$\frac{S}{S + {\overline S}}$ on $V_S^\C$ is trivial. Here $V_S^\C$ denotes the Hilbert space 
induced from $S + \overline{S}$ on $V^\C$. 
\item
Let $S$ be the covariance operator of a quasi-free state $\varphi$ of a CAR algebra $C(V)$. 
Then $\varphi$ is standard if and only if $\ker S = \{ 0\}$. 
\end{enumerate}
\end{Lemma}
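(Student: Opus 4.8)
The plan is to reduce the combinatorial-looking standardness condition to one clean criterion—faithfulness of $\varphi$ on its GNS von Neumann algebra $N := \pi_\varphi(C)''$—and then to read this faithfulness off from the kernel of the relevant operator, handling the non-degenerate and degenerate cases in exact parallel for CAR and CCR. To establish the criterion I would work inside the reduced space $z_\varphi L^2(C) = \overline{C\varphi^{1/2}C} = L^2(N)$, the standard form of $N$, on which $\varphi^{1/2}$ lies in the positive cone and is fixed by the modular conjugation $J$. Writing $p = [\overline{C\varphi^{1/2}}] = [\overline{N\varphi^{1/2}}] \in N'$ and $q = [\overline{\varphi^{1/2}C}] = [\overline{N'\varphi^{1/2}}] \in N$, the relation $J\varphi^{1/2} = \varphi^{1/2}$ gives $JpJ = q$, and $q$ is precisely the support projection of the normal state $\varphi$ on $N$. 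Standardness $p = q$ then forces $q \in N \cap N' = Z(N)$; since the central support of $q$ in $N$ equals the central support $z_\varphi$ of $\varphi^{1/2}$, which is the unit of $N$, centrality of $q$ is the same as $q = 1_N$. Hence $\varphi$ is standard if and only if $q = 1_N$, i.e. if and only if $\varphi$ is faithful on $N$. This applies verbatim to both algebras, and to Example~\ref{nonstandard}, where a pure vector state on $\cB(\cH)$ with $\dim\cH \geq 2$ is the prototypical non-faithful, hence non-standard, case.

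For the non-degenerate direction, suppose $\ker S = \{0\}$ in the CAR case. By the Lemma characterizing non-degenerate covariances, $S = (1+e^H)^{-1}$ with $\overline H = -H$, so $\varphi_S$ is a KMS state for the Bogoliubov dynamics $\{e^{itH}\}$, as already invoked in the proof of Proposition~\ref{main}; a KMS state is faithful on the von Neumann algebra it generates, hence standard. The CCR case is identical once one observes that the ratio operator $R = \frac{S}{S+\overline S}$ on $V_S^\C$ satisfies $R + \overline R = I$, the defining relation of a CAR covariance; thus $\ker R = \{0\}$ yields $R = (1+e^H)^{-1}$ and a KMS/Bogoliubov description making $\varphi_S$ faithful, hence standard.

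For the degenerate direction I would exhibit a non-trivial pure Fock tensor factor. In the CAR case with $\ker S \neq \{0\}$ (equivalently $\ker(I-S) \neq \{0\}$), the decomposition built in the proof of Proposition~\ref{main} already supplies $N \cong C(W)'' \otimes \cB(\cF)$ with $\cF = \overline{C(W^\perp)\psi^{1/2}}$ the Fock space of the reduced Fock state $\psi$, and $\varphi_S$ corresponds to $\varphi_W \otimes \omega_\Omega$; since $W^\perp \neq \{0\}$ forces $\dim\cF \geq 2$, the vacuum vector state $\omega_\Omega$ is a rank-one, non-faithful state on $\cB(\cF)$, so $\varphi_S$ is not faithful and not standard. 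In the CCR case with $\ker R \neq \{0\}$, let $E$ be the spectral projection of $R$ for the eigenvalues $\{0,1\}$ and split $V_S^\C = W^\C \oplus (W^\perp)^\C$ accordingly; because $\sigma$ is represented by $(2R-I)/i$ in the $S+\overline S$ inner product and $2R-I$ preserves both spectral subspaces, one gets $\sigma(W,W^\perp) = 0$, so $C(V,\sigma) \cong C(W) \otimes C(W^\perp)$ honestly and $\varphi_S \cong \varphi_W \otimes \psi$ with $\psi$ a pure quasi-free Fock state, its ratio operator being a projection. As $W^\perp \neq \{0\}$, Example~\ref{nonstandard} applies and $\varphi_S$ fails to be standard.

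The main obstacle I expect is matching these degenerate decompositions to the non-commutative structure rather than the faithfulness criterion, which is routine modular theory. For CAR the relevant factorization is the graded one, so one must lean on the explicit isomorphism of Proposition~\ref{main} and verify that the state restricts to the rank-one vacuum functional on the $\cB(\cF)$ factor, rather than invoking a naive tensor product. For CCR the delicate point is that the spectral splitting $V_S^\C = W^\C \oplus (W^\perp)^\C$ is simultaneously a symplectic splitting, which is exactly what legitimizes the ordinary tensor product and thus the direct appeal to Example~\ref{nonstandard}.
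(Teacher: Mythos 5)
Your proposal is correct and takes essentially the same route as the paper: sufficiency via the KMS property of quasi-free states with non-degenerate (ratio) covariance, and necessity by splitting off a pure Fock tensor factor (the graded one from Proposition~\ref{main} in the CAR case, the $\{0,1\}$-spectral splitting in the CCR case) so that Example~\ref{nonstandard} --- equivalently non-faithfulness --- applies. Your modular-theoretic reformulation of standardness as faithfulness of $\varphi$ on $\pi_\varphi(C)''$ is precisely what the paper delegates to the citation of \cite[Lemma~2.3]{gmta}, so it self-contains rather than alters the argument.
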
 

\begin{proof}
Sufficiency: 
Since $\varphi$ is a KMS-state, this follows from \cite[Lemma~2.3]{gmta}. 

Necessity: If a covariance form has a non-trivial kernel, 
the associated quasi-free state is factored through a pure state and therefore it is not standard  
in view of Example~\ref{nonstandard}. 
\end{proof}

\begin{Corollary}\label{perturbation} 
Let $\varphi$ be the quasi-free state of a covariance form $S$ and 
suppose that the kernel of $S/(S + {\overline S})$ (CCR case) or the kernel of $S$ (CAR case) is 
separable. 
Then we can find a standard quasi-free state $\varphi'$ such that 
$\varphi$ and $\varphi'$ are quasi-equivalent.   
\end{Corollary}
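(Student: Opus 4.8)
The plan is to handle both cases by a single recipe: isolate the subspace on which the relevant ratio operator degenerates, nudge the state there by a Hilbert--Schmidt amount so that the degeneracy disappears, and then invoke Theorem~\ref{QEQ} to conclude quasi-equivalence. Separability of the kernel is precisely what makes the nudge summable, hence Hilbert--Schmidt; without it one cannot keep uncountably many strictly positive perturbation parameters summable.

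In the CAR case, put $K = \ker S$ and note that, since $S + \overline S = I$, conjugation carries $K$ onto $\ker(I-S)$, which is orthogonal to $K$ in $V^\C$ and of the same (separable, hence countable) dimension. I would choose an orthonormal basis $\{e_n\}$ of $K$, so that $\{\overline{e_n}\}$ is one of $\ker(I-S)$, and define $T$ to agree with $S$ on $(K\oplus\overline K)^\perp$ and on $K\oplus\overline K$ by $Te_n = \lambda_n e_n$, $T\overline{e_n} = (1-\lambda_n)\overline{e_n}$ with $0<\lambda_n\le 1/4$ and $\sum_n \lambda_n <\infty$ (say $\lambda_n = 2^{-n}$). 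One checks at once that $T+\overline T = I$, so $T$ is again a covariance operator, that $\ker T=\{0\}$, so $\varphi_T$ is standard by the preceding Lemma, and that $\sqrt S-\sqrt T$ is Hilbert--Schmidt, its square norm being dominated by $\sum_n\bigl(\lambda_n+(1-\sqrt{1-\lambda_n})^2\bigr)<\infty$. Theorem~\ref{QEQ}(i) then yields $\varphi_S\cong\varphi_T=:\varphi'$.

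In the CCR case, the ratio operator $R = S/(S+\overline S)$ on $V_S^\C$ satisfies $0\le R\le I$ and $R+\overline R=I$, so it behaves formally exactly like a CAR covariance operator; writing $G=S+\overline S$, the covariance constraint $S-\overline S = i\sigma$ becomes $i\sigma(x,y)=\langle x,(2R-I)y\rangle_G$, so that $\sigma$ is encoded by the $G$-self-adjoint operator $2R-I$. I would set $K=\ker R$ and $\overline K=\ker(I-R)$ (again $G$-orthogonal and separable), pick a $G$-orthonormal basis $\{e_n\}$ of $K$, and perturb $R$ to $R'$ exactly as above on $K\oplus\overline K$ while leaving it unchanged elsewhere. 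Since $\sigma$ must stay fixed, I would compensate by changing the inner product to $G'$, equal to $G$ off $K\oplus\overline K$ and with $G'(e_n,e_n)=G'(\overline{e_n},\overline{e_n})=(1-2\lambda_n)^{-1}$ there; the identity $\langle x,(2R'-I)y\rangle_{G'}=\langle x,(2R-I)y\rangle_G=i\sigma(x,y)$ then holds block by block, so the data $(G',R')$ define a genuine covariance form $T$ (positive, with $T+\overline T=G'$ and $T-\overline T=i\sigma$) for the same $\sigma$, and $\ker R'=\{0\}$ makes $\varphi_T$ standard.

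It remains to verify quasi-equivalence through Theorem~\ref{QEQ}(ii). Because $1\le (1-2\lambda_n)^{-1}\le 2$, the identity map is a $G$--$G'$ homeomorphism, so $S+\overline S\cong T+\overline T$ as inner products; and $\sqrt{R}-\sqrt{R'}$, supported on the separable part $K\oplus\overline K$, again has Hilbert--Schmidt norm controlled by $\sum_n\bigl(\lambda_n+(1-\sqrt{1-\lambda_n})^2\bigr)<\infty$. I expect the one genuinely delicate point to be this final step: the operators $\sqrt{R_S}$ and $\sqrt{R_T}$ live a priori on the distinct Hilbert spaces $V_S^\C$ and $V_T^\C$, so one must transport one of them along the identification furnished by $\cong$ and confirm that the extra discrepancy caused by the change of norm on $K\oplus\overline K$ is itself Hilbert--Schmidt, which it is, once more by $\sum_n\lambda_n<\infty$. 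The CAR case carries no such subtlety, since there $S+\overline S=I$ is not perturbed at all.
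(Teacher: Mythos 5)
Your proof is correct, and it follows exactly the route the paper intends: the Corollary is stated there without proof as an immediate consequence of the standardness Lemma and Theorem~\ref{QEQ}, namely a Hilbert--Schmidt perturbation of the (separable) degenerate part of the covariance, which is precisely what you carry out, including the compensating change of inner product $G\mapsto G'$ needed in the CCR case to keep $\sigma$ fixed. Your closing worry is in fact vacuous in your construction, since the ratio operator of $T$ relative to $G'$ is literally the operator $R'$ on the common topological vector space $V_S^\C=V_T^\C$, so no extra discrepancy arises.
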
 

\begin{Lemma}\label{lp}
For standard states $\varphi$ and $\psi$, $(\varphi^{1/2}|\psi^{1/2}) = 0$ if and only if 
$\overline{C\varphi^{1/2}C} \perp \overline{C \psi^{1/2} C}$, i.e., $\varphi$ and $\psi$ 
are disjoint. 
\end{Lemma}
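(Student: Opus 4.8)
The plan is to realize $M = C^{**}$ on $L^2(C)$ with its modular conjugation $J$ and natural positive cone $L^2(C)_+$, and to treat the two implications separately. The implication disjointness $\Rightarrow (\varphi^{1/2}|\psi^{1/2}) = 0$ is immediate: if $\overline{C\varphi^{1/2}C} \perp \overline{C\psi^{1/2}C}$, then $\varphi^{1/2}$ and $\psi^{1/2}$ lie in orthogonal subspaces and their inner product vanishes, with no use of standardness. The converse carries the whole content, and is where standardness enters.

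For the converse I would begin by squeezing out of standardness the structural statement that \emph{the support projection of a standard state is central}. Writing $[M\varphi^{1/2}] \in M'$ for the projection onto $\overline{M\varphi^{1/2}} = \overline{C\varphi^{1/2}}$ and $[M'\varphi^{1/2}] = s(\varphi) \in M$ for the projection onto $\overline{M'\varphi^{1/2}} = \overline{\varphi^{1/2}M} = \overline{\varphi^{1/2}C}$, standardness forces these to coincide, so that $s(\varphi) \in M \cap M' = Z(M)$; comparing with the description of the central support recalled in Section~2 gives $s(\varphi) = z_\varphi$. Consequently $\varphi$ restricts to a \emph{faithful} normal state of $z_\varphi M$ with cyclic and separating vector $\varphi^{1/2}$, that is, $\varphi^{1/2}$ is an interior point of the self-dual cone $z_\varphi L^2(C)_+ = L^2(z_\varphi M)_+$; in particular $(\varphi^{1/2}|\zeta) > 0$ for every nonzero $\zeta$ in this cone.

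With this in hand the argument closes quickly. Since $z_\varphi$ is a central, hence $J$-fixed, projection, the vector $z_\varphi\psi^{1/2} = z_\varphi J z_\varphi J \psi^{1/2}$ lies again in the cone, indeed in $z_\varphi L^2(C)_+$, and using $z_\varphi\varphi^{1/2} = \varphi^{1/2}$ we obtain
\[
0 = (\varphi^{1/2}|\psi^{1/2}) = (\varphi^{1/2}|z_\varphi\psi^{1/2}).
\]
Strict positivity on the reduced cone then forces $z_\varphi\psi^{1/2} = 0$, so that $\psi(z_\varphi) = \|z_\varphi\psi^{1/2}\|^2 = 0$; by minimality of the support this yields $s(\psi)z_\varphi = 0$ and hence $z_\psi z_\varphi = 0$, which is exactly $\overline{C\varphi^{1/2}C} \perp \overline{C\psi^{1/2}C}$.

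The main obstacle, and the sole place the hypothesis is genuinely used, is the first step of the converse: identifying standardness with centrality of the support and deducing faithfulness of $\varphi$ on $z_\varphi M$. Once $\varphi^{1/2}$ is recognized as an interior point of the reduced natural cone, self-duality of the cone does the rest. I would take care to quote the facts $[M'\varphi^{1/2}] = s(\varphi)$ and ``cyclic-and-separating $\iff$ interior point of the natural cone'' in their general von Neumann algebraic form, so that no countable-decomposability assumption sneaks in through the back door.
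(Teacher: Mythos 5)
Your proof is correct, and it takes a genuinely different route from the paper's. The paper handles the nontrivial direction by a three-line computation in Haagerup's non-commutative $L^p$-calculus: using the tracial property of the evaluation map, $(a\varphi^{1/2}|b\psi^{1/2})$ is regrouped as the $L^2$-pairing of $\psi^{1/4}b^*a\varphi^{1/4}$ against $\psi^{1/4}\varphi^{1/4}$, and Cauchy--Schwarz gives $|(a\varphi^{1/2}|b\psi^{1/2})| \leq \|\varphi^{1/4}a^*b\psi^{1/4}\|_2\,\sqrt{(\varphi^{1/2}|\psi^{1/2})} = 0$; standardness enters only afterwards, to collapse the two-sided subspaces $\overline{C\varphi^{1/2}C}$ and $\overline{C\psi^{1/2}C}$ to the one-sided ones $\overline{C\varphi^{1/2}}$ and $\overline{C\psi^{1/2}}$, whose mutual orthogonality has just been established. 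You instead stay entirely inside Tomita--Takesaki standard-form theory: standardness forces $[M\varphi^{1/2}] = [M'\varphi^{1/2}] = s(\varphi)$, hence $s(\varphi) \in M \cap M'$ is central and equals $z_\varphi$, so $\varphi^{1/2}$ is cyclic and separating for the reduced standard form on $z_\varphi L^2(C)$, and strict positivity of the pairing of such a cone vector against nonzero elements of the (self-dual) reduced cone kills $z_\varphi\psi^{1/2}$, whence $\psi(z_\varphi) = 0$ and $z_\psi z_\varphi = 0$. All the facts you invoke ($[M'\varphi^{1/2}] = s(\varphi)$, $JzJ = z$ for central projections, $aJaJ$-invariance of the cone, strict positivity against a cyclic-and-separating cone vector, reduction by a $J$-fixed central projection) are standard and hold without $\sigma$-finiteness; the only caveat is terminological, since the natural cone has empty norm-interior and ``interior point'' should be read as exactly the strict-positivity property you state. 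As for what each approach buys: the paper's argument is shorter once the $L^p$ machinery is granted and treats $\varphi$ and $\psi$ symmetrically, while yours avoids $L^p$-interpolation altogether and uses standardness of only \emph{one} of the two states (the paper needs it for both, once for each collapse), so you in fact prove the slightly stronger statement that a standard state is disjoint from \emph{any} state whose cone representative is orthogonal to its own.
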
 

\begin{proof}
This is a consequence of Schwarz inequality and the tracial property of the evaluation map 
in non-commutative $L^p$-theory: For $a, b \in C$, 
\begin{align*}
|(a\varphi^{1/2}|b\psi^{1/2})| &= 
|(\psi^{1/4} b^*a \varphi^{1/4}| \psi^{1/4}\varphi^{1/4})|\\ 
&\leq \| \varphi^{1/4} a^*b \psi^{1/4}\|_2\,\|\psi^{1/4}\varphi^{1/4})\|_2\\ 
&= \| \varphi^{1/4} a^*b \psi^{1/4}\|_2\, \sqrt{(\varphi^{1/2}|\psi^{1/2})} = 0. 
\end{align*}
\end{proof} 

\begin{Lemma}\label{key}
For quasi-free states $\varphi$ and $\psi$, $(\varphi^{1/2}|\psi^{1/2}) > 0$ implies 
their quasi-equivalence, i.e., 
$\overline{C \varphi^{1/2} C} = \overline{C \psi^{1/2} C}$ 
($C = C(V)$ or $C(V,\sigma)$). 
\end{Lemma}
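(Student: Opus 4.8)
The plan is to separate the two cases, settling the CAR case at once from the dichotomy already established and reserving the genuine work for the CCR case, where no dichotomy is yet available.

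For $C = C(V)$ the conclusion is immediate from Theorem~\ref{dichotomy}. The realizing vectors satisfy $\varphi_S^{1/2} \in L^2(S)$ and $\varphi_T^{1/2} \in L^2(T)$, so the hypothesis $(\varphi_S^{1/2}|\varphi_T^{1/2}) > 0$ says precisely that the two central subspaces are not orthogonal. The dichotomy then rules out $L^2(S) \perp L^2(T)$ and forces $L^2(S) = L^2(T)$, which is exactly the asserted quasi-equivalence.

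For $C = C(V,\sigma)$ I would instead route through the closed formula of Theorem~\ref{TAF}(2). Writing $D = \frac{2\sqrt{AB}}{A+B}$, the operator arithmetic--geometric mean inequality gives $0 \leq D \leq I$, and the elementary analysis of the eigenvalue product shows that $\det D > 0$ if and only if $D$ has trivial kernel and $I - D$ is of trace class. Thus $(\varphi_S^{1/2}|\varphi_T^{1/2}) > 0$ is equivalent to the single condition that $I - D$ be trace class (with $D$ injective). The heart of the argument is to match this with the two conditions of Theorem~\ref{QEQ}(2). To this end I would pass to the ratio operators $R = S/(S+\overline{S})$ on $V_S^\C$ and $R' = T/(T+\overline{T})$ on $V_T^\C$; since $R$ commutes with $I - R$, the forms $2A$ and $2B$ are represented by $(\sqrt{R}+\sqrt{I-R})^2$ and $(\sqrt{R'}+\sqrt{I-R'})^2$, while $S+\overline{S}$ and $T+\overline{T}$ are represented by the identities. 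One then shows that $I - D$ trace class forces, first, the comparability $S+\overline{S} \cong T+\overline{T}$ as inner products (in its absence $D$ degenerates and the determinant vanishes, the noncommutative echo of Kakutani's singular alternative) and, second, that $\sqrt{R} - \sqrt{R'} = \sqrt{S/(S+\overline{S})} - \sqrt{T/(T+\overline{T})}$ is Hilbert--Schmidt. These are precisely the hypotheses of Theorem~\ref{QEQ}(2), giving quasi-equivalence.

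The main obstacle is this final extraction. Because $A$ and $B$ are built from Pusz--Woronowicz geometric means and are referred to two a priori distinct inner products $S+\overline{S}$ and $T+\overline{T}$, the operator $D$ is not a function of a single commuting pair, and $I - D = \frac{A+B-2\sqrt{AB}}{A+B}$ cannot be estimated by scalar calculus. Separating the two effects hidden in the single number $\operatorname{tr}(I-D)$ --- the comparability of the ambient inner products and the Hilbert--Schmidt closeness of the ratio operators --- will require quantitative operator inequalities comparing the arithmetic and geometric means, most naturally through the integral representation of the geometric mean.
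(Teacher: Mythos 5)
Your CAR half is fine: since $\varphi_S^{1/2}\in L^2(S)$ and $\varphi_T^{1/2}\in L^2(T)$, positivity of $(\varphi_S^{1/2}|\varphi_T^{1/2})$ rules out orthogonality, and Theorem~\ref{dichotomy} then forces $L^2(S)=L^2(T)$. The paper simply omits this case as ``easier,'' and your route is legitimate. The CCR half, however, has a genuine gap, and moreover the obstacle you identify as the reason for the gap is not actually there. In the Pusz--Woronowicz framework the forms $A$ and $B$ are represented on the single Hilbert space determined by $A+B$, where their ratio operators satisfy
\[
\frac{A}{A+B}+\frac{B}{A+B}=I ,
\]
so they \emph{do} commute, $D=2\sqrt{AB}/(A+B)$ \emph{is} a function of the single self-adjoint operator $A/(A+B)$, and scalar spectral calculus applies throughout. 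This is exactly what the paper exploits: writing $P=A/(A+B)$, $Q=B/(A+B)$, one has $I-D=(\sqrt{P}-\sqrt{Q})^2$ and the identity
\[
\bigl(\sqrt{P}+\sqrt{Q}\bigr)^2\bigl(\sqrt{P}-\sqrt{Q}\bigr)^2=(P-Q)^2 ,
\]
and since the assumed comparability of $S+\overline{S}$ and $T+\overline{T}$ makes $P$ and $Q$ invertible, trace-class of $(\sqrt{P}-\sqrt{Q})^2$ is equivalent to $P-Q$ being Hilbert--Schmidt. So the step you flag as requiring ``quantitative operator inequalities through the integral representation of the geometric mean'' is in fact elementary.

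The real issue is the step after that, which your proposal leaves entirely unexecuted. You plan to convert ``$I-D$ trace class'' into the two conditions of Theorem~\ref{QEQ}(2), in particular into the Hilbert--Schmidt property of $\sqrt{S/(S+\overline{S})}-\sqrt{T/(T+\overline{T})}$, where the two ratio operators live on the a priori different Hilbert spaces $V_S^\C$ and $V_T^\C$. That translation is genuinely nontrivial --- it is essentially a chunk of the Araki--Yamagami analysis --- and the paper does not attempt it. Instead, having reduced positivity of the transition probability to the Hilbert--Schmidt property of
\[
\frac{2A}{A+B}-\frac{2B}{A+B}
=\frac{S+\overline{S}+2\sqrt{S\overline{S}}}{A+B}-\frac{T+\overline{T}+2\sqrt{T\overline{T}}}{A+B},
\]
it invokes a different, directly matching quasi-equivalence criterion from \cite{AY} (the Theorem together with Propositions~6.6 and~9.1 there), stated in terms of $A$ and $B$ themselves. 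Without either that citation or a proof of the equivalence between the $A,B$-criterion and the criterion of Theorem~\ref{QEQ}(2), your argument stops exactly where the content of the lemma begins; acknowledging the missing step does not fill it.
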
 

\begin{proof}
We shall deal only with the case of CCR and the easier CAR case is omitted. 
In view of the determinant formula (Theorem~\ref{TAF}), 
we first rewrite the condition that $(\varphi_S^{1/2}|\varphi_T^{1/2}) > 0$. 
The equivalence (i.e., mutual dominations) of $S + {\overline S}$ and 
$T + {\overline T}$ is necessary, which is assumed in the following. 
Because of 
\[
S + {\overline S} \leq 2A \leq 2(S+{\overline S}), 
\quad 
T + {\overline T} \leq 2B \leq 2(T+{\overline T}), 
\]
these as well as $A + B$ are equivalent. In particular, the ratio operator 
\[
\frac{\sqrt{AB}}{A+B}
\]
is invertible and the transition probability does not vanish if and only if 
\[
I - 2\frac{\sqrt{AB}}{A+B} = \left( \sqrt{\frac{A}{A+B}} - \sqrt{\frac{B}{A+B}} \right)^2 
\]
is in the trace-class. In view of 
\[
\left( \sqrt{\frac{A}{A+B}} + \sqrt{\frac{B}{A+B}} \right)^2 
\left( \sqrt{\frac{A}{A+B}} - \sqrt{\frac{B}{A+B}} \right)^2 
= \left( \frac{A}{A+B} - \frac{B}{A+B} \right)^2 
\]
and the invertibility of $\frac{A}{A+B}$ and $\frac{B}{A+B}$, the condition is equivalent to requiring 
that 
\[
\frac{2A}{A+B} - \frac{2B}{A+B} = 
\frac{S+{\overline S} + 2\sqrt{S\overline{S}}}{A+B} -  
\frac{T+{\overline T} + 2\sqrt{T\overline{T}}}{A+B}
\] 
is in the Hilbert-Schmidt class. The last condition is equivalent to 
the quasi-equivalence of $\varphi_S$ and $\varphi_T$ 
by \cite[Theorem, Proposition 6.6, Proposition~9.1]{AY}. 

In this way, we have proved that $\varphi_S$ and $\varphi_T$ are quasi-equivalent if 
$(\varphi_S^{1/2}|\varphi_T^{1/2}) > 0$. 
\end{proof} 


In the case of CAR algebras, the converse of Lemma~3.7 is false. 

\begin{Proposition}
Let $S$ and $T$ be covariance operators on $V^\C$ with 
$P$ and $Q$ their quadratures on $(V \oplus iV)^\C = V^\C \oplus V^\C$. 
Assume that $\varphi_S$ and $\varphi_T$ are quasi-equivalent. 
Then $(\varphi_S^{1/2}| \varphi_T^{1/2}) = 0$ if and only if $P \wedge (I-Q) \not= 0$. 
\end{Proposition}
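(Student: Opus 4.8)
The plan is to pass from $\varphi_S,\varphi_T$ to their quadratures and to exploit that $P,Q$ are \emph{projections}, so that the determinant formula collapses to a transparent kernel computation. First I would invoke Theorem~\ref{dichotomy}, by which $(\varphi_S^{1/2}|\varphi_T^{1/2})=0$ precisely when $(\varphi_P^{1/2}|\varphi_Q^{1/2})=0$; thus it suffices to decide the vanishing of the overlap of the two Fock states $\varphi_P,\varphi_Q$ furnished by Proposition~\ref{main}. Before applying the transition probability formula I would record that quasi-equivalence propagates to the quadratures: since $\overline S=I-S$ gives $\sqrt{I-S}=\overline{\sqrt S}$, the hypothesis that $\sqrt S-\sqrt T$ is Hilbert--Schmidt (Theorem~\ref{QEQ}) forces every entry of $P-Q$ to be Hilbert--Schmidt, so that $P-Q$ is Hilbert--Schmidt.

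Next I would apply Theorem~\ref{TAF} to the covariance operators $P,Q$. As these are projections, $M=P^{1/2}Q^{1/2}+(I-P)^{1/2}(I-Q)^{1/2}=PQ+(I-P)(I-Q)$, and a direct multiplication using $Q(I-Q)=0$ gives
\[
MM^*=PQP+(I-P)(I-Q)(I-P),\qquad I-MM^*=P(I-Q)P+(I-P)Q(I-P).
\]
Using the idempotence of $I-Q$ and $I-P$ together with $P(I-Q)=P(P-Q)$ and $(I-P)Q=-(I-P)(P-Q)$, each summand of $I-MM^*$ factors as a product of two Hilbert--Schmidt operators, hence is trace class. Therefore $\det(MM^*)$ is a genuine Fredholm determinant with $0\le MM^*\le I$, and $(\varphi_P^{1/2}|\varphi_Q^{1/2})=0$ if and only if $\ker(MM^*)\neq\{0\}$.

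I would then identify this kernel. Since $MM^*$ is a sum of two positive operators, $MM^*\xi=0$ forces $PQP\xi=0$ and $(I-P)(I-Q)(I-P)\xi=0$, equivalently $QP\xi=0$ and $(I-Q)(I-P)\xi=0$; splitting $\xi=P\xi+(I-P)\xi$ yields
\[
\ker(MM^*)=\bigl(\operatorname{ran}P\cap\ker Q\bigr)\oplus\bigl(\ker P\cap\operatorname{ran}Q\bigr).
\]
To reduce these two conditions to the single one in the statement I would use the conjugation $J=c\oplus(-c)$ of $(V\oplus iV)^\C=V^\C\oplus V^\C$, where $c$ is the conjugation of $V^\C$; a short computation using $\overline S=I-S$ shows $\overline P=JPJ=I-P$ and likewise $\overline Q=I-Q$. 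Antilinearity of $J$ together with these relations sends $\ker P\cap\operatorname{ran}Q$ bijectively onto $\operatorname{ran}P\cap\ker Q$, so the two summands vanish simultaneously. Hence $\ker(MM^*)\neq\{0\}$ exactly when $\operatorname{ran}P\cap\ker Q=\operatorname{ran}(P\wedge(I-Q))\neq\{0\}$, which combined with the first step proves the assertion.

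The step I expect to be the main obstacle is the trace-class control of $I-MM^*$: the determinant in Theorem~\ref{TAF} is only meaningful as a Fredholm determinant, and it is precisely the quasi-equivalence hypothesis, entering through $P-Q$ being Hilbert--Schmidt, that guarantees convergence and turns the vanishing of the overlap into a genuine hard-core intersection $P\wedge(I-Q)$ rather than an artefact of a divergent product. A secondary delicate point is the conjugation symmetry $\overline P=I-P$, $\overline Q=I-Q$, which is what permits the answer to be phrased in terms of $P\wedge(I-Q)$ alone.
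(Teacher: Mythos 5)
The paper states this Proposition without any proof, so there is no argument of the authors to compare yours against; judged on its own merits, your proof is correct, and it runs entirely on the paper's own machinery (Proposition~\ref{main}, Theorem~\ref{dichotomy}, Theorems~\ref{QEQ} and~\ref{TAF}), so it legitimately fills the gap. Each step checks out. The identity $(\varphi_P^{1/2}|\varphi_Q^{1/2})=(\varphi_S^{1/2}|\varphi_T^{1/2})^2$ from Theorem~\ref{dichotomy} reduces the question to the Fock states. Quasi-equivalence together with $\overline{\sqrt{S}}=\sqrt{I-S}$ does make every entry of $P-Q$ Hilbert--Schmidt. For projections the formula of Theorem~\ref{TAF} gives $MM^*=PQP+(I-P)(I-Q)(I-P)$, and your factorizations $P(I-Q)P=[P(P-Q)]\,[(P-Q)P]$ and $(I-P)Q(I-P)=[(I-P)(P-Q)]\,[(P-Q)(I-P)]$ show that $I-MM^*$ is trace class, so the determinant is a genuinely convergent product that vanishes precisely when $\ker(MM^*)\neq\{0\}$. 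The kernel identification $\ker(MM^*)=(\operatorname{ran}P\cap\ker Q)\oplus(\ker P\cap\operatorname{ran}Q)$ is correct because the two summands of $MM^*$ are positive. Finally, the antiunitary $J=c\oplus(-c)$ indeed satisfies $JPJ=I-P$ and $JQJ=I-Q$; this is nothing but the fact, already recorded in the paper, that the quadratures are covariance operators for the real Hilbert space $V\oplus iV$, and it interchanges the two summands of the kernel, so $\det(MM^*)=0$ if and only if $\operatorname{ran}P\cap\ker Q\neq\{0\}$, i.e., $P\wedge(I-Q)\neq0$. Your closing diagnosis of the delicate points is also accurate: the quasi-equivalence hypothesis is exactly what excludes vanishing of the determinant through divergence rather than through a kernel, and the symmetry $\overline{P}=I-P$ is exactly what allows the criterion to be stated with $P\wedge(I-Q)$ alone rather than jointly with $(I-P)\wedge Q$.
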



In the case of CCR C*-algebras, however, the transition probability is already sensetive to 
the dichotomy:  


\begin{Theorem}[Dichotomy]
Let $(V,\sigma)$ be a presymplectic vector space ($\sigma$ being an alternating bilinear form on $V$) 
and $S$, $T$ be covariance forms  
with the associated quasi-free states 
$\varphi_S$, $\varphi_T$. Then the following conditions are equivalent. 
\begin{enumerate}
\item 
Two states $\varphi_S$ and $\varphi_T$ are quasi-equivalent. 
\item 
The transition probability $(\varphi_S^{1/2}|\varphi_T^{1/2})$ is strictly positive. 
\item 
Positive forms $(\sqrt{S} + \sqrt{\overline S})^2$ and $(\sqrt{T} + \sqrt{\overline T})^2$ on $V^\C$ are Hilbert-Schmidt 
equivalent. 
\end{enumerate}
Otherwise, $\varphi_S$ and $\varphi_T$ are disjoint. 
\end{Theorem}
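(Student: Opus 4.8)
The plan is to establish the triangle of equivalences (i) $\Leftrightarrow$ (iii) $\Leftrightarrow$ (ii) and then read off the concluding dichotomy, the point being that almost all of the analytic content is already packaged in Lemma~\ref{key} and its proof. Throughout I work in the Hilbert space $V_S^\C$ induced by $S + \overline S$ and write $R = \frac{S}{S+\overline S}$ for the ratio operator, so that $\overline S$ is represented by $I - R$. Since $R$ and $I-R$ commute, the geometric mean $\sqrt{S\overline S}$ is represented by $\sqrt{R(I-R)}$, whence $2A = S + 2\sqrt{S\overline S} + \overline S$ is represented by $R + 2\sqrt{R(I-R)} + (I-R) = (\sqrt R + \sqrt{I-R})^2$; this is what is meant by $2A = (\sqrt S + \sqrt{\overline S})^2$, and likewise $2B = (\sqrt T + \sqrt{\overline T})^2$. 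In particular $S + \overline S \leq 2A \leq 2(S+\overline S)$, so $A$ is equivalent to $S+\overline S$ and $B$ to $T + \overline T$ as inner products; hence the inner-product clause of (iii) coincides with the clause $S + \overline S \cong T + \overline T$ of Theorem~\ref{QEQ}(ii).

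First I would dispose of the degenerate case in which $S + \overline S$ and $T + \overline T$ fail to dominate one another. Then the equivalence clauses of (iii) and of Theorem~\ref{QEQ}(ii) cannot hold, so (i) and (iii) are both false, while the ratio $\frac{2\sqrt{AB}}{A+B}$ is then not bounded below and the determinant formula of Theorem~\ref{TAF} forces $(\varphi_S^{1/2}|\varphi_T^{1/2}) = 0$, so (ii) is false too; all three are then trivially equivalent. From here on I assume $S + \overline S \cong T + \overline T$, so that $A$, $B$ and $A+B$ are mutually equivalent and $\frac{A}{A+B}$, $\frac{B}{A+B}$ are invertible.

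Under this assumption the whole triangle is contained in the proof of Lemma~\ref{key}. Its first half gives (ii) $\Leftrightarrow$ (iii): by Theorem~\ref{TAF} the transition probability is strictly positive if and only if $I - \frac{2\sqrt{AB}}{A+B} = \left(\sqrt{\tfrac{A}{A+B}} - \sqrt{\tfrac{B}{A+B}}\right)^2$ is trace class, and the displayed factorization together with the above invertibility turns this into the requirement that $\frac{A}{A+B} - \frac{B}{A+B}$ be Hilbert--Schmidt, which is precisely the Hilbert--Schmidt equivalence of $2A$ and $2B$, i.e.\ (iii). Its last line gives (iii) $\Leftrightarrow$ (i) through \cite[Theorem, Proposition~6.6, Proposition~9.1]{AY}. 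Combining the two closes the triangle, and in particular (ii) $\Rightarrow$ (i) recovers Lemma~\ref{key}.

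It remains to prove the dichotomy: if $\varphi_S$ and $\varphi_T$ are not quasi-equivalent, they are disjoint. As disjointness depends only on quasi-equivalence classes, I would first replace each state by a quasi-equivalent \emph{standard} quasi-free state $\varphi_{S'}$, $\varphi_{T'}$ by Corollary~\ref{perturbation}. These remain non-quasi-equivalent, so the equivalence (i) $\Leftrightarrow$ (ii) applied to them gives $(\varphi_{S'}^{1/2}|\varphi_{T'}^{1/2}) = 0$; being standard, Lemma~\ref{lp} then yields $\overline{C\varphi_{S'}^{1/2}C} \perp \overline{C\varphi_{T'}^{1/2}C}$, and transporting back along the quasi-equivalences shows $\varphi_S$ and $\varphi_T$ disjoint. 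I expect the principal obstacle to be precisely this reduction to standard states: Corollary~\ref{perturbation} requires $\ker\frac{S}{S+\overline S}$ to be separable, reflecting that a degenerate covariance can only be deformed into a non-degenerate one within the Hilbert--Schmidt class over a separable kernel. One therefore has to isolate the residual non-separable part, on which the state factors through a \emph{pure} (Fock) state and the quasi-equivalent-or-disjoint alternative is immediate from irreducibility, and glue this to the standard part.
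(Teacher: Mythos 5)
Your handling of the equivalence triangle (i) $\Leftrightarrow$ (ii) $\Leftrightarrow$ (iii) is correct and is in fact the paper's own reasoning: the proof of Lemma~\ref{key} already contains the two-way chain (transition probability positive $\Leftrightarrow$ trace-class condition $\Leftrightarrow$ Hilbert--Schmidt condition $\Leftrightarrow$ quasi-equivalence via \cite{AY}), and the paper likewise reduces the theorem to showing that $(\varphi_S^{1/2}|\varphi_T^{1/2})=0$ forces disjointness. Your endgame for the dichotomy --- replace both states by quasi-equivalent standard ones via Corollary~\ref{perturbation}, apply the contrapositive of Lemma~\ref{key} and then Lemma~\ref{lp}, and transport back --- is also exactly the paper's final step, and it is a complete argument whenever the kernels of $S/(S+\overline{S})$ on $V_S^\C$ and of $T/(T+\overline{T})$ on $V_T^\C$ are separable.

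The genuine gap is the general case, which is where most of the paper's proof actually lives, and your sketched repair does not close it, for two concrete reasons. First, the ``non-separable Fock parts'' of $S$ and of $T$ are kernels of ratio operators living in \emph{different} Hilbert spaces ($V_S^\C$ versus $V_T^\C$); a decomposition of $(V,\sigma)$ adapted to one state has no reason to decompose the other one diagonally, so there is no common tensor factorization of $C(V,\sigma)$ along which your ``gluing'' of the two alternatives could be performed. Second, the problematic directions need not be Fock at all: when $\ker\bigl(\sqrt{AB}/(A+B)\bigr)\neq 0$ one finds $h\in V$ with $\sigma(h,\cdot)=0$ and $S(h,h)=0\neq T(h,h)$; on the central subalgebra generated by $\{e^{ith}\}$ the state $\varphi_S$ is a character while $\varphi_T$ is a Gaussian state of a commutative algebra, which is neither pure nor factorial, so ``immediate from irreducibility'' is unavailable precisely there (note also that in this case the trace-class quantity can be finite, e.g.\ in finite dimensions, so no divergence argument can be invoked). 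The paper supplies the two devices your proposal is missing: (a) for the kernel case, a direct comparison of the spectral decompositions of the central unitaries $e^{ith}$ --- the identity on $L^2(S)$ versus multiplication by $e^{itx}$ on copies of $L^2(\R,\gamma)$ inside $L^2(T)$ --- which yields disjointness with no separability hypothesis; (b) otherwise, a reduction to a \emph{single} separable subspace $W$, built by taking an $(A+B)$-orthonormal sequence $\{h_j\}$ witnessing the divergence of the trace and closing it under the countable set $M$ of monomials in the four ratio operators $S/(A+B)$, $\overline{S}/(A+B)$, $T/(A+B)$, $\overline{T}/(A+B)$; this $W^\C$ is invariant under $\sigma/(A+B)$ and conjugation, so both $S$ and $T$ split diagonally along $(V,\sigma)=(W,\sigma)\oplus(W^\perp,\sigma)$, non-quasi-equivalence survives the restriction to $W$ (the trace stays infinite), Corollary~\ref{perturbation} now legitimately applies on the separable $W$, and disjointness of the $W$-components propagates to the full states through the tensor decomposition of $\overline{C^*(V,\sigma)\varphi_S^{1/2}C^*(V,\sigma)}$. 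Without steps (a) and (b), your argument proves the theorem only under an unstated separability assumption on $V$.
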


\begin{proof}
In view of Lemma~\ref{key}, it suffices to show that 
the condition $(\varphi_S^{1/2}|\varphi_T^{1/2}) = 0$ implies the disjointness of 
$\varphi_S$ and $\varphi_T$. 

By replacing $V^\C$ with $V_{A+B}^\C$, we may assume that $A+B$ is non-degenerate and complete 
on $V^\C$. 
If 
\[
\ker \left(  
\sqrt{\frac{A}{A+B}} \sqrt{\frac{B}{A+B}} 
\right) 
\]
is not trivial, we can find $0 \not= h \in V^\C$ such that $(A/(A+B))h = 0$ or $(B/(A+B))h = 0$. 
We may assume that the former is the case. Since the operator $A/(A+B)$ is self-conjugate, 
we can further assume that $h = \overline{h}$, i.e., $h \in V$. Now the condition  $A(h,h) = 0$ implies 
$S(h,h) = 0 = \overline{S}(h,h)$, whence $S(h,v) = 0 = \overline{S}(h,v)$ and 
$\sigma(h,v) = 0$ for any $v \in V$. Thus $\{ e^{ith} \}_{t \in \R}$ is in the center of $C^*(V,\sigma)$. 
Since $h \not= 0$ with respect to the inner product $A+B$, we see $B(h,h) \not= 0$ and therefore 
$T(h,h) \not= 0$. 

We now compare the spectral decomposition of $\{ e^{ith} \}$ when represented by left multiplication: 
On the subspace $\overline{C^*(V,\sigma) \varphi_S^{1/2} C^*(V,\sigma)}$, it is represented by the identity 
operator, whereas on the subspace $\cH = \overline{C^*(V,\sigma) \varphi_T^{1/2} C^*(V,\sigma)}$ it is isomorphic 
to a direct sum of the multiplication operator $\{ e^{itx} \}$ on $L^2(\R,\gamma)$ 
($\gamma$ being a Gaussian measure);  
\[
(e^{itx}\xi)(\tau) = e^{it\tau}\xi(\tau) 
\quad
\text{with} 
\quad
\xi \in L^2(\R,\gamma). 
\]
Thus $\varphi_S$ and $\varphi_T$ are disjoint. 

We now assume that the kernel of $\sqrt{AB}/(A+B)$ is trivial. Then, by the determinant formula for 
the transition probability, $(\varphi_S^{1/2}|\varphi_T^{1/2}) = 0$ implies that the bounded operator 
\[
\left( 
\sqrt{\frac{A}{A+B}} - \sqrt{\frac{B}{A+B}} 
\right)^2
\]
is not in the trace class. 
In particular, we can find a sequence $\{ h_j \}_{j \geq 1}$ of $(A+B)$-orthonormal vectors in $V^\C$ such that 
\[
\sum_j (A+B)(h_j, 
\left( 
\sqrt{\frac{A}{A+B}} - \sqrt{\frac{B}{A+B}} 
\right)^2
h_j) = +\infty. 
\]

Let $M$ be the set of monomials of $S/(A+B)$, 
$T/(A+B)$, $\overline{S}/(A+B)$ and $\overline{T}/(A+B)$. 
Let $W^\C$ be the closed subspace spanned by 
$\{ Mh_j, M\overline{h_j} \}_{j \geq 1}$. 
Since $M$ is countable, $W^\C$ is seperable as a hilbertian vector space. 
Clearly $W^\C$ is invariant under four generators of $M$. 
In view of $i\sigma = S - {\overline S}$, $W^\C$ is also invariant under $\sigma/(A+B)$. 
Since $M$ is closed under taking conjugate, so is $W^\C$, which justifies the notation, 
i.e., $W$ denotes the real part of $W^\C$. 
Let $W^\perp$ be the orthogonal complement of $W$ relative to $A+B$ so that 
$(V,\sigma) = (W,\sigma) \oplus (W^\perp,\sigma)$ with $S$ and $T$ diagonally decomposed. 
Let $S_W$ and $T_W$ 
be the reduced covariance forms. Then 
\[
\frac{\sqrt{A_WB_W}}{A_W + B_W}
\]
is the restriction of $\sqrt{AB}/(A+B)$ to the subspace $W^\C$ and 
\begin{align*}
\text{trace}
&\left( 
\sqrt{\frac{A_W}{A_W+B_W}} - \sqrt{\frac{B_W}{A_W+B_W}} 
\right)^2\\
&= \text{trace}_{W^\C} 
\left( 
\sqrt{\frac{A}{A+B}} - \sqrt{\frac{B}{A+B}} 
\right)^2\\
&\geq 
\sum_j (A+B)(h_j, 
\left( 
\sqrt{\frac{A}{A+B}} - \sqrt{\frac{B}{A+B}} 
\right)^2
h_j) = +\infty, 
\end{align*}
which means that $\varphi_{S_W}$ and $\varphi_{T_W}$ are not quasi-equivalent 
by Theorem~\ref{QEQ}. 

Now the obvious identification 
\begin{multline*}
\overline{C^*(V,\sigma) \varphi_S^{1/2} C^*(V,\sigma)} 
=\\  
\overline{C^*(W,\sigma) \varphi_{S_W}^{1/2} C^*(V,\sigma)} 
\otimes 
\overline{C^*(W^\perp,\sigma) \varphi_{S_{W^\perp}}^{1/2} C^*(W^\perp, \sigma)}
\end{multline*}
reveals that the disjointness of $\varphi_S$ and $\varphi_T$ follows from that of 
$\varphi_{S_W}$ and $\varphi_{T_W}$. Thus the problem is reduced to the case $W = V$ 
so that $V$ is separable relative to the inner product $A+B$ 
(so we omit the suffix $W$) and that $\varphi_S$ and $\varphi_T$ are not 
quasi-equivalent. 
We can then find standard covariance forms $S'$ and $T'$ such that 
$\varphi_S$ and $\varphi_{S'}$ (resp.~$\varphi_T$ and $\varphi_{T'}$) are quasi-equivalent 
by Corollary~\ref{perturbation}. 

Since $\varphi_S$ and $\varphi_T$ are not quasi-equivalent, the same holds for $\varphi_{S'}$ and 
$\varphi_{T'}$, which implies the disjointness of $\varphi_{S'}$ and $\varphi_{T'}$ by 
Lemma~\ref{lp}. 
Thus $L^2(S) = L^2(S')$ is orthogonal to $L^2(T) = L^2(T')$, 
proving the disjointness of $\varphi_S$ and $\varphi_T$. 
\end{proof}





\end{document}